\definecolor{red2}{rgb}{0.7, 0, 0.1}
\newtheorem{lemma}{Lemma}
\newtheorem{theorem}{Theorem}
\newtheorem{remark}{Remark}
\newcommand{\beq}{\begin{equation}}
\newcommand{\eeq}{\end{equation}}
\newcommand{\beas}{\begin{eqnarray*}}
\newcommand{\eeas}{\end{eqnarray*}}
\newcommand{\bea}{\begin{eqnarray}}
\newcommand{\eea}{\end{eqnarray}}
\def\pr{\textsf{P}} 
\def\ep{\textsf{E}}
\def\S{{\Sigma}}
\begin{document}

\def\spacingset#1{\renewcommand{\baselinestretch}%
{#1}\small\normalsize} \spacingset{1}

\title{A Normality Test for High-dimensional Data based on a Nearest Neighbor Approach}

\author{Hao Chen$^{1},$  \ and \ Yin Xia$^{2}$}

\date{}

\footnotetext[1]{Department of Statistics, University of California at Davis. The research of Hao Chen was supported in part by NSF Grants DMS-1513653 and DMS-1848579.}

\footnotetext[2]{Department of Statistics, School of Management, Fudan University. The research of Yin Xia was supported in part by  NSFC Grants 12022103, 11771094, 11690013.}

\maketitle

\vspace{0.2in}
\begin{abstract}
Many statistical methodologies for high-dimensional data assume the population is normal.  Although a few multivariate normality tests have been proposed, to the best of our knowledge, none of them can properly control the type I error when the dimension is larger than the number of observations.
In this work, we propose a novel nonparametric test that utilizes the nearest neighbor information.
The proposed method guarantees the asymptotic type I error control under the high-dimensional setting. 
Simulation studies verify the empirical size performance of the proposed test when the dimension grows with the sample size and at the same time exhibit  a superior power performance of the new test compared with  alternative methods.  We also illustrate our approach through two popularly used data sets in high-dimensional classification and clustering literatures where deviation from the normality assumption may lead to invalid conclusions. 
\end{abstract}


\noindent{\bf Keywords}: Nearest neighbor; high-dimensional test; covariance matrix estimation

\spacingset{1.45} 
\newpage

\section{Introduction}
\label{sec:intro}

The population normality assumption is widely adopted in many classical statistical analysis (e.g., linear and quadratic discriminant analysis in classification, normal error linear regression models, and the Hotelling $T^2$-test), as well as many recently developed methodologies, such as network inference through Gaussian graphical models \citep{ma2007arabidopsis, yuan2007model, friedman2008sparse, rothman2008sparse, fan2009network, yuan2010high,  liu2013ggm, xia2015}, high-dimensional linear discriminant analysis \citep{bickel2004some, fan2008high,cai2011direct, mai2012direct}, post-selection inference for regression models \citep{berk2013valid, lee2016exact, taylor2018post}, and change-point analysis for high-dimensional data \citep{xie2013sequential, chan2015optimal, wang2018high, liu2019scalable}. 
When the data is univariate, there are many classical tools to check the normality assumption, such as the normal quantile-quantile plot and the Shapiro-Wilk test \citep{shapiro1965analysis}.  However, many of the modern applications involve multivariate or even high-dimensional data and it constantly calls for multivariate normality testing methods with good theoretical performance.  

In this article, we aim to address the following testing problem in the high-dimensional setting with a proper control of type I error. Given a set of observations $X_1,X_2,\dots,X_n\overset{iid}{\sim} F$, where $F$ is a distribution in $\mathbb{R}^d$, one wishes to test
$$H_0: F \text{ is a multivariate Gaussian distribution},$$
versus the alternative hypothesis
$$H_a: F \text{ is not a multivariate Gaussian distribution}.$$

In the literature, there have been a good number of methods proposed to test the normality of multivariate data.  For example, \cite{mardia1970measures} considered two statistics to measure the multivariate skewness and kurtosis separately, and constructed two tests for the normality of the data by using each of these two statistics;  Bonferroni correction can be applied to unify these two tests.  More recently, \cite{doornik2008omnibus} proposed a way to combine the two test statistics effectively.  In another line, \cite{royston1983some} generalized the Shapiro-Wilk test to the multivariate setting by applying the Shapiro-Wilk test to each of the coordinates and then combining the test statistics from all coordinates, while \cite{fattorini1986remarks} tried to find the projection direction where the data is most non-normal and then applied the Shapiro-Wilk test to the projected data. Later, \cite{zhou2014powerful} combined these two approaches by considering the statistics from both random projections as well as the original coordinates. 
{ In a related work, \cite{villasenor2009generalization} proposed a multivariate Shapiro--Wilk's test based on the transformed test statistics standardized by the sample mean and covariance matrix.}
In addition, {there is a series of work that test normality through the characteristic function}
\citep{baringhaus1988consistent, henze1990class, henze1997new}.  
Besides those methods, there is also another work that extends the Friedman-Rafsky test \citep{friedman1979multivariate}, a nonparametric two-sample test, to a multivariate normality test \citep{smith1988test}.  Those aforementioned methods provide useful tools for testing multivariate normality assumption for the conventional low-dimensional data. 

We illustrate in Table \ref{tab:ex_size} the empirical size for some of the representative existing tests: ``Skewness" (the test based on the measure of multivariate skewness in \cite{mardia1970measures}), ``Kurtosis" (the test based on the measure of multivariate kurtosis in \cite{mardia1970measures}), ``Bonferroni" (the method combining the tests based on multivariate skewness and kurtosis through the Bonferroni correction), ``Ep" (an effective way of combining the multivariate skewness and kurtosis in \cite{doornik2008omnibus}), ``Royston" (generalized Shapiro-Wilk test in \cite{royston1983some}), ``HZ" (the test based on the characteristic function proposed in \cite{henze1990class}), 
{ ``mvSW'' (the multivariate Shapiro--Wilk's test proposed in \cite{villasenor2009generalization})}, 
and ``eFR" (extended Friedman-Rafsky test in \cite{smith1988test}).  { In particular, the multivariate Shapiro--Wilk's test requires smaller dimension than the sample size, and the extended Friedman-Rafsky test requires an estimate of the variance of the distribution while there is a lack of discussions on such estimations in their paper.  In the table, 
we use ``mvSW$_0$'' and ``eFR$_0$" to respectively represent the test proposed in \cite{villasenor2009generalization} and the extended Friedman-Rafsky test that are based on the sample covariance matrix, and 
use ``mvSW'' and ``eFR" to  respectively represent the tests that are based on a newly developed covariance matrix estimation method, the adaptive thresholding approach proposed in \cite{cai2011direct}.
 We observe from the table that, except for the improved tests  ``mvSW'', and ``eFR'', all other existing methods are either not applicable to the cases when the dimension is larger than the sample size, i.e., $d>n$, or cannot control the type I error well when the dimension is high.}  

\begin{table}[htbp]
\caption{Empirical size (estimated from 1,000 trials) of the tests at 0.05 significance level. Data are generated from the standard multivariate Gaussian distribution with $n=100$. The numbers in the table that are larger than 0.1 are bolded (\textbf{cannot} control the size well). The test Ep is not applicable when $d>n$ and the test ``mvSW$_0$'' is not applicable when $d\geq n$.
\label{tab:ex_size}} 
\begin{center}

	\begin{tabular}{ccccccccc}
	\toprule
	$d$ & 5 & 10 & 20 & 50 & 80 & 90 & 100 & 200 \\ \midrule
	Skewness & 0.035 & 0.039 & 0.014 & 0 & 0 & 0  & \textbf{0.114} & \textbf{0.384}   \\ \midrule
	Kurtosis & 0.041 & 0.071 & \textbf{0.254} & \textbf{0.999} & \textbf{1} & \textbf{1} & \textbf{0.950} & \textbf{0.998} \\ \midrule
	Bonferroni & 0.029 & 0.040 & \textbf{0.158} & \textbf{0.994} & \textbf{0.943} & \textbf{1} & \textbf{1} & \textbf{0.997} \\ \midrule
	Ep & 0.053 & 0.059 & 0.046 & 0.044 & 0.047 & 0.040 & \textbf{0.141} & --  \\ \midrule
	Royston & 0.073 & 0.092 & 0.080 & \textbf{0.137} & \textbf{0.129} & \textbf{0.164} & \textbf{0.168} & \textbf{0.245} \\ \midrule
	HZ & 0.048 & 0.051 & 0.051 & 0 & \textbf{1} & \textbf{1} &  \textbf{1} & \textbf{1} \\ \midrule
	mvSW$_0$ & 0.056 & 0.057 & 0.038 & 0.052 & 0.042 & 0.045 & -- & -- \\ \midrule	
	mvSW$^{ a}$ & 0.051 & 0.057 & 0.042 & 0.052 & 0.043 & 0.046 & 0.045 & 0.051  \\ \midrule		
	eFR$_0$ & 0.056 & 0.041 & 0.048 & 0.081 & \textbf{0.153} & \textbf{0.145} & \textbf{0.161} & 0.088 \\ \midrule	
	eFR$^{ b}$ & 0.045 & 0.046 & 0.048 & 0.041 & 0.038 & 0.038 & 0.044 & 0.042 \\
	\bottomrule      
	\end{tabular}
	
\end{center}
\footnotesize{$^{ a}$ The improved multivariate Shapiro--Wilk's test applying to the transformed statistics standardized by the adaptive thresholding covariance estimators in \cite{cai2011direct}.}

\footnotesize{$^{ b}$ The improved extended Friedman-Rafsky test based on the adaptive thresholding covariance estimators.}

\end{table}

The extended Friedman-Rafsky test is based on an edge-count two-sample test proposed in \cite{friedman1979multivariate}.  Due to the curse of dimensionality, it was shown in a recent work, \cite{chen2017new}, that the edge-count two-sample test would suffer from low or even trivial power under some commonly appeared high-dimensional alternatives with typical sample sizes (ranging from hundreds to millions).  The same problem also exists in the extended Friedman-Rafsky test for testing normality in the high-dimensional setting. { Furthermore, the extended Friedman-Rafsky test can no longer properly control the type I error when the dimension is much larger than the sample size, and similarly for the improved multivariate Shapiro--Wilk's test.} We refer the details to the size and power comparisons in Section \ref{sec:simulation}. 

In this paper, we take into consideration the findings in \cite{chen2017new} and propose a novel nonparametric multivariate normality testing procedure based on nearest neighbor information.  Through extensive simulation studies, we observe that the new test has good performance on the type I error control, even when the dimension of the data is larger than the number of observations.  It also exhibits much higher power than ``mvSW'' and ``eFR''  under the high-dimensional setting.  Moreover, we provide theoretical results in controlling the type I error for the new test  when the dimension grows with the sample size. As far as we know, there is a paucity of systematic and theory-guaranteed hypothesis testing solutions developed for such type of problems in the high-dimensional setting, and our proposal offers a timely response. We also apply our test respectively to two data sets, a popularly used lung cancer data set in the linear discriminant analysis literatures \citep{fan2008high,cai2011direct}  where normality is a key assumption, and a colon cancer data set that was used in high-dimensional clustering literature \citep{jin2016influential} where the data are assumed to follow the normal assumption. The testing results provide useful prerequisites for such analyses that are based on the normality assumption. 

The rest of the paper is organized as follows.  In Section \ref{sec:meth}, we propose a new nonparametric procedure to test the normality of the high-dimensional data and introduce the theoretical properties of the new approach.  The performance of the proposed method is examined through simulation studies in Section \ref{sec:simulation} and the method is applied to two data sets in Section \ref{sec:application}.  Section \ref{sec:discussion} discusses a related statistic, possible extensions of the current proposal, and some sensitivity analyses.  The main theorem is proved in Section \ref{sec:thmproof} with technical lemmas collected and proved in Section \ref{sec:lemma}. 

\section{Method and Theory}
\label{sec:meth}
We propose in this section a novel nonparametric algorithm to test the normality of the high-dimensional data. We start with the intuition of the proposed method, and then study the error control of the new approach based on the asymptotic equivalence of two events for searching the nearest neighbors under the null hypothesis.

\subsection{Intuition} 
A key fact of the Gaussian distribution is that it is completely determined by its mean and variance.  Suppose that the mean $(\mu)$ and covariance matrix $(\Sigma)$ of the distribution $F$ are known, then testing whether $F$ is a multivariate Gaussian distribution is the same as testing whether $F=G$, where $G=\mathcal{N}_d(\mu,\Sigma)$. { For this purpose, one may consider goodness-of-fit tests, such as \cite{bartoszynski1997multidimensional} and the approach proposed in \cite{liu2016kernelized} for high-dimensional data.  We could also generate a new set of observations $Y_1,Y_2,\dots,Y_n\overset{iid}{\sim}G$, and apply the two-sample tests, such as \cite{jurevckova2012nonparametric} and \cite{marozzi2015multivariate} and the graph-based two-sample tests \citep{friedman1979multivariate, chen2017new, chen2018weighted}, to examine $F=G$ for arbitrary dimensions.}


However, in practice, the parameters $\mu$ and $\Sigma$ are unknown in general.  To compromise, we use the mean $(\mu_x)$ and covariance matrix $(\Sigma_x)$ estimated from the set of observations $\{X_1,X_2,\dots,X_n\}$ as substitutes.  We could again generate a new set of observations $Y_1,Y_2,\dots,Y_n\overset{iid}{\sim}G_x=\mathcal{N}_d(\mu_x,\Sigma_x)$, but unfortunately, now the original testing problem is no longer equivalent to testing whether $F=G_x$.  

To address this issue, we use the same combination of $\mu_x$ and $\Sigma_x$ to generate another set of independent observations $X_1^*,X_2^*,\dots,X_n^*\overset{iid}{\sim}G_x=\mathcal{N}_d(\mu_x,\Sigma_x)$. Then we estimate the mean and covariance matrix of these new observations and denote them by $\mu_{x^*}$ and $\Sigma_{x^*}$, respectively.  Based on them, we further generate a new set of independent observations from the normal distribution with mean $\mu_{x^*}$ and covariance matrix $\Sigma_{x^*}$, i.e., $Y_1^*,Y_2^*,\dots,Y_n^*\overset{iid}{\sim}\mathcal{N}_d(\mu_{x^*},\Sigma_{x^*})$.  Intuitively, if the null hypothesis $H_0$ is true, i.e., the original distribution $F$ is multivariate Gaussian, then the relationship between $\{X_1,X_2,\dots,X_n\}$ and $\{Y_1,Y_2,\dots,Y_n\}$ would be similar to that of $\{X_1^*,X_2^*,\dots,X_n^*\}$ and $\{Y_1^*,Y_2^*,\dots,Y_n^*\}$.  Henceforth, we shall test whether these two relationships are similar enough to decide whether $F$ is close enough to a Gaussian distribution.  

In \cite{smith1988test}, the Friedman-Rafsky's two-sample test was used for this purpose.  Unfortunately, as will be shown later in Section \ref{sec:simulation}, this test was unable to properly control the type I error when the dimension is growing with the number of observations. 

In order to guarantee the error control in the high-dimensional setting, we use the nearest neighbor information in this article.  To be specific, we pool  $\{X_1,X_2,\dots,X_n\}$ and $\{Y_1,Y_2,\dots,Y_n\}$ together, and for each observation, we find its nearest neighbor, which is defined under the Euclidean distance in the current paper.  Similarly, we pool  $\{X_1^*,X_2^*,\dots,X_n^*\}$ and $\{Y_1^*,Y_2^*,\dots,Y_n^*\}$ together, and again find the nearest neighbor for each observation.   

{Nearest neighbor information has been employed in hypothesis testing that can be applied to high-dimensional data \citep{schilling1986multivariate, henze1988multivariate, chen2015graph, chen2019sequential}.  However, in these work, nearest neighbors were used for two-sample testing, while in contrast, we only have one sample at the beginning of the current setup and then generate a second sample that depends on the original one.  Hence, we need to develop a completely different set of technical tools to investigate the theoretical properties of the current construction.}   Let $YY$ be the event that an observation in $\{Y_1,Y_2,\dots,Y_n\}$ finds its nearest neighbor in $\{Y_1,Y_2,\dots,Y_n\}$, and let $Y^*Y^*$ be the event that an observation in $\{Y_1^*,Y_2^*,\dots,Y_n^*\}$ finds its nearest neighbor in $\{Y_1^*,Y_2^*,\dots,Y_n^*\}$.  We will show below in Theorem \ref{thm:asym} that the events $Y^*Y^*$ and $YY$ are asymptotic equivalent under some suitable conditions. As a result, we can estimate the empirical distribution of the test statistic based on $YY$ through the distribution of the statistic associated with $Y^*Y^*$. Consequently, the type I error of the proposed approach can be properly controlled at some pre-specified significance level.

\subsection{Theorem on asymptotic equivalence} 
Before studying the main result on the asymptotic equivalence between two events of searching nearest neighbors, we first introduce some notation. 
Denote by $\lambda_{\min}(\Sigma)$ and $\lambda_{\max}(\Sigma)$ the smallest and largest eigenvalues of $\Sigma$.
For two sequences of real numbers $\{a_n\}$ and $\{b_n\}$, denote by $a_n=O(b_n)$ if there exist constants $C>c>0$ such that $c|b_n| \le |a_{n}| \leq C|b_{n}|$ for all sufficiently large $n$. 
 We also remark here that, when $d=1$ or $d=2$, the aforementioned univariate and conventional multivariate methods in the introduction can be easily applied to test the normality assumption, and we shall focus in our work on the cases when the dimension $d$ is larger than 2.

We next introduce two assumptions. 
\begin{enumerate}[({A}1)]
	\item \label{C1}
	The eigenvalues of $\Sigma$ satisfy $C_1\leq \lambda_{\min}(\Sigma)\leq \lambda_{\max}(\Sigma)\leq C_2$ for some constants $C_1, C_2>0$.
	
	\item\label{C2}
	There exists an estimator of $\mu$ such that $\|\mu_x-\mu\|_2 \leq O_\pr(1)$, and an estimator of $\S$ such that  $\|\Sigma_x-\Sigma\|_2 = o_{\pr}(n^{-\frac{1}{d}-\frac{(2+a) \log d + \kappa}{2 \log n}})$ with $\kappa = 1-\frac{1}{d}\log |\Sigma| - \log 2$ and $a=\left\{\begin{array}{ll}
	0 & \text{ if } d\log d\leq \log n \\
	\frac{ \log n}{\xi_{d,n} d \log d} & \text{ if } d\log d> \log n \text{ and } d=o(\log n)  \\
	1/\epsilon_d & \text{ otherwise,}
\end{array} \right.$ where $1\ll \xi_{d,n}=o(\log n/d)$ and  $1\ll\epsilon_d=o(\log d)$. 
\end{enumerate}

Under the above two conditions, Theorem \ref{thm:asym} studies the asymptotic equivalence between the events $YY$ and $Y^*Y^*$ under the null hypothesis, which in turn guarantees the type I error control of the proposed method.
\begin{theorem}\label{thm:asym}
Assume (A\ref{C1}) and (A\ref{C2}). Then it follows that, under $H_0$, as $n\rightarrow\infty$,
$$\pr(YY)-\pr(Y^*Y^*) \rightarrow 0.$$
\end{theorem}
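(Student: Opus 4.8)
The plan is to establish the stronger statement that, under a natural coupling, the events $YY$ and $Y^*Y^*$ coincide with probability tending to $1$. I would couple the construction by reusing the Gaussian ``driving'' noise: write $X_i=\mu+\Sigma^{1/2}V_i$, $Y_i=\mu_x+\Sigma_x^{1/2}U_i$, and then generate $X_i^*=\mu_x+\Sigma_x^{1/2}V_i$ and $Y_i^*=\mu_{x^*}+\Sigma_{x^*}^{1/2}U_i$ using the \emph{same} i.i.d.\ standard normal vectors $V_i,U_i$; this alters neither the law of $(X_i,Y_i)$ nor that of $(X_i^*,Y_i^*)$, hence neither $\pr(YY)$ nor $\pr(Y^*Y^*)$. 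Since nearest-neighbour relations are invariant under invertible affine maps, applying $z\mapsto\Sigma_x^{-1/2}(z-\mu_x)$ to the pool $\{X_i\}\cup\{Y_i\}$ sends the $Y$'s to $\{U_i\}_{i\le n}$ and — taking $\mu_x$ to be the (affine-equivariant) sample mean — the $X$'s to $\{A(V_i-\bar V)\}_{i\le n}$ with $A:=\Sigma_x^{-1/2}\Sigma^{1/2}$ (here $\bar V:=n^{-1}\sum_{i\le n}V_i$); applying $z\mapsto\Sigma_{x^*}^{-1/2}(z-\mu_{x^*})$ to $\{X_i^*\}\cup\{Y_i^*\}$ sends the $Y^*$'s to the \emph{same} set $\{U_i\}_{i\le n}$ and the $X^*$'s to $\{A^*(V_i-\bar V)\}_{i\le n}$ with $A^*:=\Sigma_{x^*}^{-1/2}\Sigma_x^{1/2}$. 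Hence $YY$ and $Y^*Y^*$ become one and the same event, ``the nearest neighbour of $U_1$ among the remaining $2n-1$ points lies in $\{U_2,\dots,U_n\}$'', evaluated for the configuration $\{U_i\}\cup\{A(V_i-\bar V)\}$ in one case and $\{U_i\}\cup\{A^*(V_i-\bar V)\}$ in the other.

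The next step is to control the discrepancy between these two ``competitor'' clouds, whose $i$th points differ by exactly $(A-A^*)(V_i-\bar V)$. Assumption (A\ref{C1}) makes $\Sigma^{\pm1/2},\Sigma_x^{\pm1/2},\Sigma_{x^*}^{\pm1/2}$ well conditioned and the matrix square root Lipschitz, so $\|A-I\|_2\lesssim\|\Sigma_x-\Sigma\|_2$ and $\|A^*-I\|_2\lesssim\|\Sigma_{x^*}-\Sigma_x\|_2$; assumption (A\ref{C2}) — applied at $(\mu,\Sigma)$ for the first bound and at $(\mu_x,\Sigma_x)$ for the second, the latter being legitimate since $(\mu_x,\Sigma_x)$ satisfies (A\ref{C1})–(A\ref{C2}) with probability tending to $1$ — then gives $\|A-A^*\|_2=o_\pr(\eta_n)$, where $\eta_n:=n^{-\frac1d-\frac{(2+a)\log d+\kappa}{2\log n}}$. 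Together with $\max_{i\le n}\|V_i-\bar V\|_2\lesssim\sqrt d+\sqrt{\log n}$ (with probability tending to $1$), this produces an event $\mathcal G_n$, $\pr(\mathcal G_n)\to1$, on which each competitor point is displaced by at most $\tau_n:=o\!\big(\eta_n\sqrt{d+\log n}\big)$ between the two worlds. A short estimate — using $\|U_1\|_2,\|A(V_i-\bar V)\|_2\lesssim\sqrt d$ and the fact that the distance from $U_1$ to its nearest competitor is of order $n^{-1/d}$ when $d$ is small relative to $\log n$ and of order $\sqrt d$ when $d$ is large — shows that on $\mathcal G_n$ such a displacement changes $\min_{i\le n}\|U_1-A(V_i-\bar V)\|_2^2$ by at most some $\delta_n$ of order $\tau_n n^{-1/d}+\tau_n^2$ (resp.\ $\tau_n\sqrt d+\tau_n^2$).

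The crux is an anti-concentration bound. On $\mathcal G_n$ the events $YY$ and $Y^*Y^*$ can disagree only if the margin
\[
M:=\Big|\min_{2\le j\le n}\|U_1-U_j\|_2^2-\min_{1\le i\le n}\|U_1-A(V_i-\bar V)\|_2^2\Big|
\]
is at most $\delta_n$, so it is enough to prove $\pr(M\le\delta_n)\to0$; with $\pr(\mathcal G_n)\to1$ this gives $\pr(YY\,\triangle\,Y^*Y^*)\to0$, and hence the theorem. Conditionally on $U_1=u$, the two minima in $M$ are independent — the first depends only on $U_2,\dots,U_n$, the second only on the $V_i$'s and the covariance estimates, and these are independent of one another and of $U_1$ — so $\pr(M\le\delta_n\mid u)\le2\delta_n\,\|f_{D(u)}\|_\infty$, where $D(u)$ is the minimum of $n-1$ i.i.d.\ noncentral $\chi^2_d(\|u\|_2^2)$ variables. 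One then needs a sharp, $d$-uniform bound on $\|f_{D(u)}\|_\infty$ over typical $u$ (for which $\|u\|_2^2\asymp d$, the noncentrality contributing only bounded factors): this density peaks like $n^{2/d}$ when $d$ is small relative to $\log n$ — the regime in which the minimum sits near the origin at scale $n^{-2/d}$, by the curse of dimensionality — and like $d^{-1/2}$ when $d$ is large relative to $\log n$ — the Gaussian/moderate-deviation regime — with an intermediate behaviour in between. Then $\delta_n\|f_{D(u)}\|_\infty\to0$ in every regime is precisely the inequality that the rate in (A\ref{C2}) is engineered to deliver: the three cases for $a$, the constraints $1\ll\xi_{d,n}=o(\log n/d)$ and $1\ll\epsilon_d=o(\log d)$, and the correction $\kappa=1-\frac1d\log|\Sigma|-\log2$ are exactly what is needed to balance the displacement $\tau_n$ against the nearest-neighbour scale $n^{-1/d}$ and the density size $n^{2/d}$ (and correspondingly in the other regimes).

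I expect this last step to be the only genuine difficulty. The coupling, the affine reduction, the matrix-perturbation estimates, and the transfer of (A\ref{C1})–(A\ref{C2}) from $(\mu,\Sigma)$ to $(\mu_x,\Sigma_x)$ are all routine bookkeeping; but obtaining a sufficiently sharp, regime-uniform bound on the sup-norm of the density of a minimum of $n$ (noncentral) chi-square variables — fine enough to absorb the $n^{-1/d}$-type slack built into (A\ref{C2}) — is delicate, and is precisely where the technical lemmas collected in Section~\ref{sec:lemma} are needed.
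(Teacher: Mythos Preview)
Your reduction rests on the claim that ``nearest-neighbour relations are invariant under invertible affine maps,'' but this is false: under the Euclidean metric, nearest neighbours are preserved only by isometries (translations and orthogonal maps), not by general linear maps such as $z\mapsto\Sigma_x^{-1/2}(z-\mu_x)$. Applying that map to the pool $\{X_i\}\cup\{Y_i\}$ therefore changes which point is the nearest neighbour of a given $Y_i$, so the transformed event is \emph{not} the original event $YY$; the same objection applies to the $Y^*Y^*$ side. The entire ``same event evaluated on two slightly perturbed competitor clouds'' picture collapses at this point, and everything downstream (the margin $M$, the anti-concentration step) is comparing the wrong quantities. The paper avoids this trap by never transforming the pooled sample: it keeps the original coordinates, couples $\{X_i^*\}$ to $\{X_i\}$ only through a change of integration variable (with $\Sigma_x$ held fixed), and then compares the small-ball probabilities $p_x$ and $p_w$ directly via Lemmas~\ref{lemma:onept_ori}--\ref{lemma:onept}, splitting into cases according to the order of $D_{\min,x}$.

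There is a second, independent gap: your coupling $X_i^*=\mu_x+\Sigma_x^{1/2}V_i$ reuses the very $V_i$ that generate the $X_i$, but $(\mu_x,\Sigma_x)$ is a function of $\{V_j\}_{j\le n}$. Conditionally on $(\mu_x,\Sigma_x)$ the $V_i$ are therefore no longer i.i.d.\ standard normal, so the coupled $X_i^*$ do \emph{not} have the required conditional law $\mathcal N_d(\mu_x,\Sigma_x)$; the probability of your coupled $Y^*Y^*$ need not equal the genuine $\pr(Y^*Y^*)$, and $\pr(YY\,\triangle\,Y^*Y^*)\to0$ under this coupling would not deliver the theorem. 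The paper's change-of-measure step is precisely what sidesteps this: it integrates the $X_i^*$ out against a fresh Gaussian density $f^*$ and only then substitutes variables, so no randomness is reused. If you want to pursue a coupling route, you would need fresh driving noise $V_i'$ for the $X_i^*$, and then you are back to comparing two genuinely different random clouds without the affine shortcut---which is essentially the situation the paper handles by the direct ball-probability analysis in Section~\ref{sec:thmproof}.
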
 
The proof of the theorem is provided in Section \ref{sec:thmproof}.

\begin{remark}\label{remark1}
Assumption (A\ref{C1}) is mild and is widely used in the high-dimensional literature \citep{bickel2008regularized, rothman2008sparse, yuan2010high, cai2014two}.
Assumption (A\ref{C2}) implies the relationship between the dimension $d$ and the sample size $n$. Specifically, $\|\mu_x-\mu\|_2 \leq O_\pr(1)$ can be easily satisfied when $d=O(n^\gamma), \gamma\leq 1$.
For the condition $\|\Sigma_x-\Sigma\|_2 = o_{\pr}(n^{-\frac{1}{d}-\frac{(2+a)\log d + \kappa}{2 \log n}})$, when $d\geq 3$ and $d=O(n^\gamma), \gamma<1/2$, it can be satisfied by many estimators under some regularity conditions. For example, if we apply the adaptive thresholding estimator in \cite{cai2011direct}, and assume that $\Sigma$ is $s_0$ sparse in the sense that there are at most $s_0$ nonzero entries in each row of $\Sigma$, then we have
$\|\Sigma_x-\Sigma\|_2=O_{\pr}(s_0\sqrt{\log d/n}).$  
So the condition holds if $s_0=o(n^{\frac{1}{2}-\xi-\frac{1}{d}})$ for some $\xi>(1+\frac{a}{2})\gamma$, where $a$ is either equal or tending to zero as defined in detail in (A\ref{C2}). 
When $d=O(n^\gamma), \gamma\geq 1/2$, simulation results show that the conclusion holds well when $d>n, d=O(n)$.  There is potential to relax the condition on $\|\Sigma_x-\Sigma\|_2$ in the theorem.  In the current proof, we made big relaxations from Equation \eqref{eq:diff} to \eqref{eq:diff2} and from Equation \eqref{eq:diffy} to \eqref{eq:diffy2} (see Section \ref{sec:thmproof}).  More careful examinations could lead to tighter conditions.  This requires non-trivial efforts and we save it for future work. 
\end{remark}

\begin{remark}\label{remark2}
 The theory based on nearest neighbor information in the high-dimensional setting has so far received little attention in the literature. We provide in this paper a novel proof for the asymptotic equivalence on two events of searching the nearest neighbors and it is among the first methods that utilizes such nonparametric information and in the mean while guarantees the asymptotic type I error control.
\end{remark}

\subsection{Algorithm and theoretical error control} \label{sec:algo}
Based on Theorem \ref{thm:asym}, we could adopt the following algorithm to test the multivariate normality of the data. To be specific, because of the asymptotic equivalence between the events $YY$ and $Y^*Y^*$, we repeatedly generate the data from the multivariate normal distribution with estimated mean and covariance matrix, and use the empirical distribution of the test statistics based on $Y^*Y^*$ to approximate the empirical distribution of the test statistic based on $YY$ under the null hypothesis.

Denote by $r(YY)$ the percent of $Y$'s that find their nearest neighbors in $\{Y_1,\dots,Y_n\}$, and $r(Y^*Y^*)$ is defined similarly for $Y^*$'s. Let $m(r(Y^*Y^*))$ be the average of the $r(Y^*Y^*)$'s from Step 3 of the algorithm. We then propose a nonparametric normality test based on nearest neighbor information as the following.

\begin{algorithm}[H]
\caption{}
\label{algo1}
\begin{algorithmic}[1]
\item Generate $Y_1, \dots, Y_n \overset{iid}{\sim} \mathcal{N}_d(\mu_x, \Sigma_x)$, calculate $r(YY)$.

\item \label{step2} Generate $X_1^*, \dots, X_n^* \overset{iid}{\sim} \mathcal{N}_d(\mu_x, \Sigma_x)$, estimate its mean $\mu_{x^*}$ and covariance matrix $\Sigma_{x^*}$.  
Generate $Y_1^*, \dots, Y_n^* \overset{iid}{\sim} \mathcal{N}_d(\mu_{x^*}, \Sigma_{x^*})$, calculate $r(Y^*Y^*)$.

\item Repeat Step 2 for $B$ times to get an estimate of the empirical distribution of $r(YY)$ under $H_0$.

\item Compute the two-sided {\it sampling $p$-value}, $p(YY)$, i.e., the percentage of $|r(Y^*Y^*)- m(r(Y^*Y^*))|$ (out of $B$) that are larger than or equal to $|r(YY)- m(r(Y^*Y^*))|$, where $|\cdot|$ is the absolute value.

\item For a given significance level $0<\alpha<1$, define $\Psi_\alpha = I\{p(YY) \leq \alpha\}$. Reject the null hypothesis whenever $\Psi_\alpha = 1$.
\end{algorithmic}
\end{algorithm}




Note that Algorithm \ref{algo1} is a simplified version of a more sophisticated algorithm that generates $n$ independent sets of $\{Y_1, \dots, Y_n\}$ and $\{Y_1^*, \dots, Y_n^*\}$ in Steps 1 and 2, with $r(YY)$ and $r(Y^*Y^*)$ respectively representing the percent of $Y_1$'s and $Y_1^*$'s that find their nearest neighbors in their corresponding sets. The resulting test $\Psi_\alpha^*$ of such an algorithm guarantees the type I error control based on Theorem \ref{thm:asym}, i.e., $\pr(\text{Type I error}) = \pr_{H_0}(\Psi_\alpha^* = 1) \rightarrow \alpha, \mbox{ as } n, B\rightarrow \infty$. However, this algorithm is computationally much more expensive and it has asymptotically the same size performance as Algorithm \ref{algo1}, and hence we mainly focus on Algorithm \ref{algo1} in the current article.



In the implementation, we use the sample mean to obtain $\mu_x$ and $\mu_{x^*}$ and use the adaptive thresholding method in \cite{cai2011direct} to compute $\Sigma_x$ and $\Sigma_{x^*}$.  For the selection of $B$, the empirical distribution can be more precisely estimated when $B$ is larger. We choose $B=500$ in the implementation and it provides well error control as shown in Section \ref{sec:simulation}.
{ It is worthwhile to note that, for faster and easier implementation of the method, the $p$-value $p(YY)$ we obtain in Algorithm \ref{algo1} is random, and we hence call it ``sampling $p$-value''. To improve the power performance of the method, we can further increase the number of such sampling procedure, and the details are discussed in Section \ref{sec:L10}.}




\section{Simulation Studies}
\label{sec:simulation}
We analyze in this section the numerical performance of the newly developed algorithm. As we studied in the introduction, the existing methods ``Skewness'', ``Kurtosis'', ``Bonferroni'', ``Ep'', ``Royston'' and ``mvSW$_0$'' all suffer from serious size distortion or are not applicable when the dimension is relatively large. { We thus consider in this section the size and power comparisons of our approach with the method ``eFR'', in which the covariances are estimated by the adaptive thresholding method in \cite{cai2011direct}, the multivariate Shapiro--Wilk's test (mvSW) proposed in \cite{villasenor2009generalization} applying to the transformed statistics standardized by the adaptive thresholding covariance estimators, as well as the Fisher's test (Fisher) by combining the $p$-values for each dimension of the aforementioned adaptive thresholding covariance standardized Shapiro--Wilk's test.} { As suggested by \cite{cai2011direct}, we use the fivefold cross validation to choose the tuning parameter. Once we obtain an estimator $\hat \Sigma^*$,  we let $\hat \Sigma=(\hat \Sigma^*+\delta I)/(1+\delta)$ with $\delta=\max\{-\lambda_{\min}(\S^*),0\}+0.05$ to guarantee the positive definiteness of the estimated covariance matrix.}

The following matrix models are used to generate the data. {Note that Model 3 considers the nearly singular scenario where the condition number is} {around 80 in typical simulation runs when $d=100$.}
\begin{itemize}


\item Model 1: $\S^{(1)}=I$.

\item Model 2: $\S^{(2)}=(\sigma^{(2)}_{ij})$ where $\sigma^{(2)}_{ij}=0.5^{|i-j|}$ for $1\leq i,j\leq p$. 

\item Model 3: $\S^{*(3)}=(\sigma^{*(3)}_{ij})$ where $\sigma^{*(3)}_{ii}=1$, $\sigma^{*(3)}_{ij}= \text{Unif}(1)*\text{Bernoulli}(1,0.02)$ for $i < j$ and $\sigma^{*(3)}_{ji}=\sigma^{*(3)}_{ij}$. $\S^{(3)}=(\S^{*(3)}+\delta I)/(1+\delta)$ with $\delta=\max\{-\lambda_{\min}(\S^{*(3)}),0\}+0.05$ to ensure positive definiteness.
\end{itemize}

The sample sizes are taken to be $n=100$ and 150, while the dimension $d$ varies over the values 20, 100 and 300. For each model, data are generated from multivariate distribution with mean zero and covariance matrix $\S$. Under the null hypothesis, the distribution is set to be multivariate normal, while under the alternative hypothesis, the distribution is set to be one of the following distributions.
\begin{itemize}

\item Distribution 1: Multivariate $t$ distribution with degrees of freedom $\nu = d/2$.


\item Distribution 2: Mixture Gaussian distribution $0.5 \mathcal{N}_d(\mathbf{0}, (1-a)\Sigma) + 0.5 \mathcal{N}_d(\mathbf{0}, (1+a)\Sigma)$ with $a=\frac{1.8}{\sqrt{d}}$.

\end{itemize}

We set the size of the tests to be 0.05 under all settings, and choose $B=500$ in the algorithm.  We run 1,000 replications to summarize the empirical size and power.  The empirical size results are reported in Table \ref{tab:size} and the power results of Distributions 1 and 2 are reported in Tables \ref{tab:power1} and \ref{tab:power2}. 

\begin{table}[htbp]
\caption{Empirical size (in percents) of the proposed algorithm (NEW), extended Friedman-Rafsky test (eFR), multivariate Shapiro--Wilk's test (mvSW) and the Fisher's test (Fisher). \label{tab:size}}
   \begin{center}
     \begin{tabular}{cccccccc}

     \toprule
     \multicolumn{2}{c}{$n$}   & \multicolumn{3}{c}{$100$}& \multicolumn{3}{c}{$150$}
              \\  \cmidrule(r){3-5} \cmidrule(r){6-8}

       \multicolumn{2}{c}{$d$} & \multicolumn{1}{c}{20}&
      \multicolumn{1}{c}{100}& \multicolumn{1}{c}{300}&
      \multicolumn{1}{c}{20}&
      \multicolumn{1}{c}{100}& \multicolumn{1}{c}{300}
       \\ \midrule

     \multirow{4}{*}{Model 1} 
     &NEW        & 4.3 & 4.3 & 5.1 & 3.9  & 4.8 & 5.6    \\
     & eFR         & 4.3 & 4.8 & 4.3 & 4.8  & 5.4 & 3.7    \\  
     & mvSW     & 6.4 & 4.6 & 5.1 & 4.0  & 4.8 & 5.3    \\  
     & Fisher      & 6.0 & 4.3 & 4.4 & 3.7  & 4.7 & 5.1    \\    \midrule

      \multirow{4}{*}{Model 2} 
      &NEW        & 5.9 & 4.3 & 6.4 & 5.9  & 5.8 & 4.9    \\
     & eFR         & 4.3 & 5.1 & 4.8 &  3.9 &7.0  & 6.9    \\  
     & mvSW     & 5.6 & 6.2 & 10.7 &  5.4 & 6.9 & 6.2    \\  
     & Fisher      & 5.3 & 6.6 & 10.4 & 4.4  & 7.0 & 5.6    \\    \midrule

      \multirow{4}{*}{Model 3} 
     &NEW        & 5.9 & 5.2 & 6.3 & 4.8  & 5.3 & 4.9    \\
     & eFR         & 5.4 & 4.8 & 19.7 & 4.8  & 4.0 & 15.2    \\  
     & mvSW     & 4.7 & 5.8 & 5.9 & 3.4  & 4.3 & 7.3    \\  
     & Fisher      & 5.2 & 5.3 & 5.3 & 3.8  & 4.4 & 6.1    \\    
        \bottomrule
    \end{tabular}

\end{center}

\end{table}

From Table \ref{tab:size}, we observe that the new test can control the size reasonably well under all settings, while the extended Friedman-Rafsky test has some serious size distortion for Model 3 when the dimension is larger than the sample size. { In addition, both of the multivariate Shapiro--Wilk's test and the Fisher's test have some size inflation for Model 2 when $d=300$ and $n=100$.}  
 
\begin{table}[htbp]
\caption{Empirical power (in percents) of the proposed algorithm (NEW), extended Friedman-Rafsky test (eFR), multivariate Shapiro--Wilk's test (mvSW) and the Fisher's test (Fisher) for multivariate $t$ distribution. \label{tab:power1}}
   \begin{center}
Multivariate $t$-distribution

\vspace{1em}  
  
     \begin{tabular}{cccccccc}

     \toprule
     \multicolumn{2}{c}{$n$}   & \multicolumn{3}{c}{$100$}& \multicolumn{3}{c}{$150$}
              \\  \cmidrule(r){3-5} \cmidrule(r){6-8}

       \multicolumn{2}{c}{$d$} & \multicolumn{1}{c}{20}&
      \multicolumn{1}{c}{100}& \multicolumn{1}{c}{300}&
      \multicolumn{1}{c}{20}&
      \multicolumn{1}{c}{100}& \multicolumn{1}{c}{300}
       \\ \midrule

     \multirow{4}{*}{Model 1} 
     &NEW        & 58.5 & 91.3 & 93.0 & 79.9  & 98.5 & 99.2    \\
     & eFR         & 6.7 & 3.7 & 6.4 & 10.2  & 3.9 & 4.7    \\  
     & mvSW     & 86.3 & 21.2 & 9.0 &  96.5 & 29.0 & 12.3    \\  
     & Fisher      & 86.2 & 20.8 & 8.7 & 96.7  & 30.1 &  12.1   \\    \midrule

      \multirow{4}{*}{Model 2} 
      &NEW        & 20.2 & 71.3 & 86.0 & 32.2  & 86.4 & 94.2    \\
     & eFR         & 11.8 & 5.4 & 5.2 & 15.0  & 4.6 & 6.0    \\  
     & mvSW     & 75.4 & 26.6 & 21.3 & 92.3  & 30.9 & 17.1    \\  
     & Fisher      & 75.8 & 27.2 & 20.9 & 92.6  & 31.2 & 16.0    \\    \midrule

      \multirow{4}{*}{Model 3} 
     &NEW        & 56.5 & 87.9 & 94.9 & 74.1  & 97.9 & 98.2    \\
     & eFR         & 6.7 & 4.8 & 18.4 & 11.0  & 3.5 & 11.2    \\  
     & mvSW     & 84.9 & 28.8 & 10.6 &  96.6 & 30.3 & 16.8    \\  
     & Fisher      & 85.6 & 28.5 & 10.8 & 96.5  & 31.0 & 17.3    \\    
      \bottomrule
    \end{tabular}
    
\end{center}

\end{table}

\begin{table}[htbp]
\caption{Empirical power (in percents) of the proposed algorithm (NEW), extended Friedman-Rafsky test (eFR), multivariate Shapiro--Wilk's test (mvSW) and the Fisher's test (Fisher) for mixture Gaussian distribution. \label{tab:power2}}
   \begin{center}

 Mixture Gaussian distribution
 
 \vspace{1em}
 
     \begin{tabular}{cccccccc}

     \toprule
     \multicolumn{2}{c}{$n$}   & \multicolumn{3}{c}{$100$}& \multicolumn{3}{c}{$150$}
              \\  \cmidrule(r){3-5} \cmidrule(r){6-8}

       \multicolumn{2}{c}{$d$} & \multicolumn{1}{c}{20}&
      \multicolumn{1}{c}{100}& \multicolumn{1}{c}{300}&
      \multicolumn{1}{c}{20}&
      \multicolumn{1}{c}{100}& \multicolumn{1}{c}{300}
       \\ \midrule

     \multirow{4}{*}{Model 1} 
     &NEW        & 45.8 & 81.7 & 81.6 & 66.3  & 95.8 & 94.8    \\
     & eFR         & 6.6 & 3.9 & 5.3 & 7.4  & 3.7 & 5.7    \\  
     & mvSW     & 55.9 & 12.5 & 8.0 & 68.7  & 17.1 & 10.5    \\  
     & Fisher      & 56.0 & 12.7 & 8.0 & 69.8  & 16.8 & 10.3    \\    \midrule

      \multirow{4}{*}{Model 2} 
      &NEW        & 15.3 & 61.9 & 71.9 & 26.7  & 67.2 & 81.9    \\
     & eFR         & 6.3 & 5.6 & 4.9 & 10.4  & 6.4 & 6.5    \\  
     & mvSW     & 46.3 & 17.7 & 19.2 & 63.3  & 19.7 & 12.7    \\  
     & Fisher      & 47.3 & 18.3 & 19.2 & 63.8  & 19.8 & 12.3    \\    \midrule

      \multirow{4}{*}{Model 3} 
     &NEW        & 45.0 & 75.4 & 86.9 & 64.0  & 90.8 & 94.7    \\
     & eFR         & 6.5 & 4.5 & 21.7 & 7.5  & 3.8 & 14.9    \\  
     & mvSW     & 53.2 & 19.0 & 10.5 & 70.1  & 18.9 & 13.9    \\  
     & Fisher      & 55.0 & 19.3 & 9.8 & 70.5  & 18.5 & 13.2    \\    
      \bottomrule
    \end{tabular}

\end{center}

\end{table}

For power comparison, we first studied the annoying heavy tail scenario -- multivariate $t$-distribution.  It can be seen from Table \ref{tab:power1} that, the new test can capture the signal very well, while the extended Friedman-Rafsky test suffers from much lower power. { In the meanwhile, both of the multivariate Shapiro--Wilk's test and the Fisher's test have competitive power performance under the low-dimensional settings, but have fast decaying power performance (much lower than the proposed method) as $d$ increases.}  We also studied the scenario that the distribution is a mixture of two multivariate Gaussian distributions and we observed similar phenomena in Table \ref{tab:power2} that the new test has much higher power than the extended Friedman-Rafsky test under all settings and has better performance than ``mvSW'' and ``Fisher'' for $d=100$ and $300$. 

{ The empirical size and power performance of all four methods are also illustrated in the empirical cumulative distribution function (ecdf) plots as shown in Figures \ref{ecdf1.fig} and \ref{ecdf2.fig}, for Model 1 and $d=n=100$. We observe similar patterns for the other models.} In summary, for all scenarios studied above, our newly proposed algorithm provides superior performance in both empirical size as well as empirical power comparing with the existing methods.

\begin{figure}[h]
 \caption{Empirical size cdf plots of the four methods for Model 1, $d=n=100$. }
 \includegraphics[width=1\textwidth]{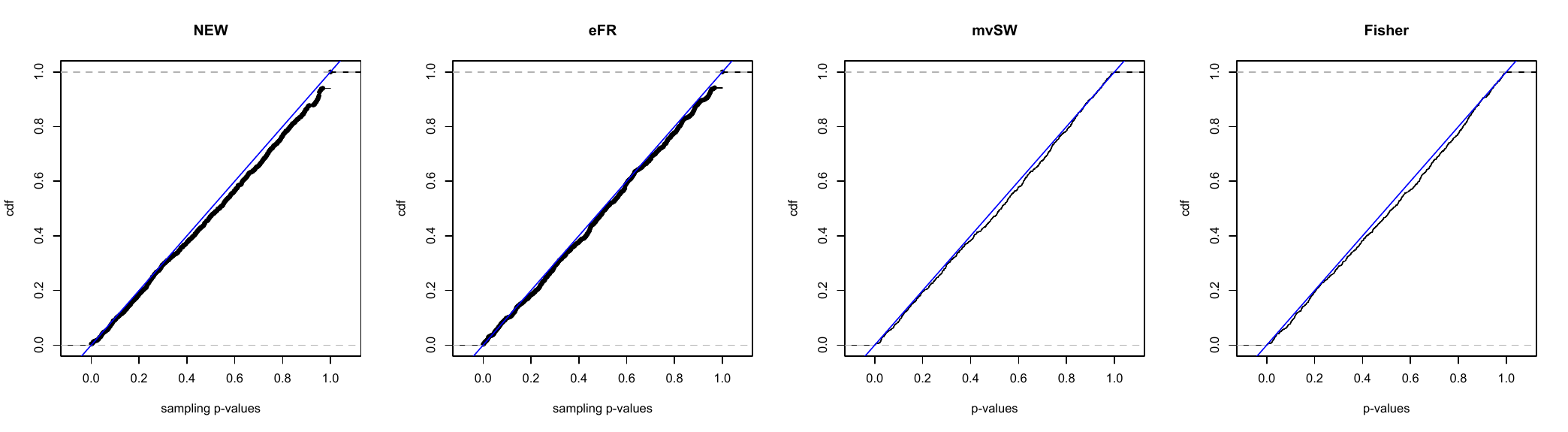} 
 \label{ecdf1.fig}
\end{figure}

\begin{figure}[htbp]
 \caption{Empirical power cdf plots of the four methods for Model 1, Distribution 1, $d=n=100$. }
 \includegraphics[width=1\textwidth]{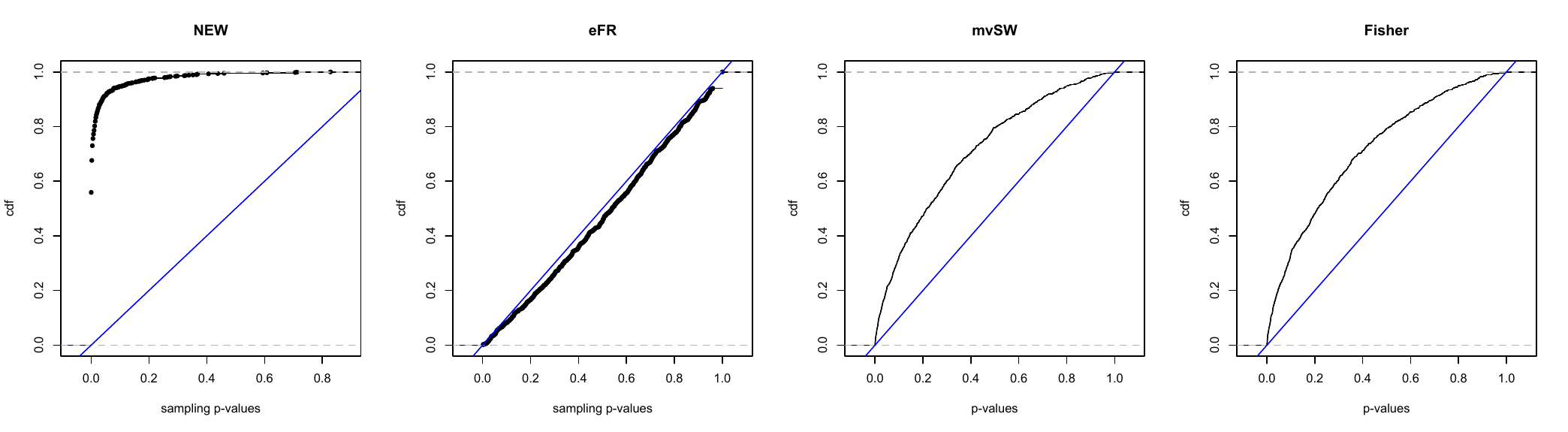} 
 \label{ecdf2.fig}
\end{figure}
\section{Application}
\label{sec:application}

Classification is an important statistical problem that has been extensively studied both in the traditional low-dimensional setting and the recently developed high-dimensional setting. In particular, Fisher's linear discriminant analysis has been shown to perform well and enjoy certain optimality as the sample size tends to infinity while the dimension is fixed \citep{anderson1954introduction}, and it has also been widely studied in the high-dimensional setting when the sample covariance matrix is no longer invertible, see, e.g., \cite{bickel2004some}, \cite{fan2008high}, \cite{cai2011direct} and \cite{mai2012direct}. In all of those studies, normality of the data is a key assumption in order to obtain the linear discriminant rule and investigate the subsequent analysis of misclassification rate. { We study in this section a lung cancer data set, which was analyzed by \cite{gordon2002translation} and is available at \texttt{R} documentation \texttt{data(lung)} { with package \texttt{propOverlap}}. This data set was popularly used in the classification literature \citep{fan2008high,cai2011direct} where normality is a key assumption. In addition, we explore a data set that was analyzed in \cite{jin2016influential} by their method IF-PCA for clustering, where data normality is assumed.}


\subsection{Lung cancer data}
The lung cancer data set has 181 tissue samples, including 31 malignant pleural mesothelioma (MPM) and 150 adenocarcinoma (ADCA), and each sample is described by 12533 genes. This data set has been analyzed in \cite{fan2008high} by their methods FAIR and NSC, and in \cite{cai2011direct} by their LPD rule, for distinguishing MPM from ADCA, which is important and challenging from both clinical and pathological perspectives. However, before applying their proposed methods, none of them have checked the normality of the data, which is a fundamental assumption in the formulation of linear discriminants. If the normality fails to hold, then the misclassification rates can be affected and their results may no longer be valid.

In this section, we use our newly developed method to check the normality of the 150 ADCA samples in this lung cancer data set. 
Note that, multivariate normality assumption for the 12533 genes of the ADCA samples will be rejected if any subset for this large number of genes deviate from the normality.  Thus, we randomly select a group of 200 genes, and applied our new method to test the multivariate normality assumption. By applying Algorithm \ref{algo1} with $B=500$, we obtain that, the sampling $p$-value is equal to 0, which gives sufficient evidence that the samples from this data set have severe deviation from the multivariate normal distribution.  We further repeat this procedure for 100 times.  In each time, we randomly select a group of 200 genes and apply Algorithm 1 ($B=500$) to the selected genes.  It turns out that the sampling $p$-values are all 0 for these 100 times.   Thus, it is not reasonable to assume the normality and directly apply the recent developed high-dimensional linear discriminant procedures to classify MPM and ADCA, as studied in \cite{fan2008high} and \cite{cai2011direct}. So our procedure serves as an important initial step for checking the normality assumption before applying any statistical analysis methods which assume such conditions.

{
\subsection{Colon cancer data}
Next, we study in this section a gene expression data set on tumor and normal colon tissues that was analyzed and cleaned by \cite{dettling2004bagboosting}.  This data set can be found at \url{https://blog.nus.edu.sg/staww/softwarecode/}.  It has 40 tumor and 22 normal colon tissue samples, and each sample is described by 2000 genes.  This data set has been analyzed in \cite{jin2016influential} by their method IF-PCA for clustering, where they imposed normality assumption on the data, though they found the violation to such assumption in their analysis as the empirical null distribution of a test statistic they used was far from the theoretical null distribution derived from the normal assumption. 

In this section, we use our newly developed method to check the normality of the 40 tumor samples in this colon cancer data set. 
We compare the proposed method with eFR, the multivariate Shapiro--Wilk's test and the Fisher’s test in this analysis.
By applying Algorithm \ref{algo1} with $B=500$, we obtain that, the sampling $p$-value is equal to 0, which gives a sufficient evidence that the samples from this data set have severe deviation from the multivariate normal distribution.  This double confirms the deviation from the normality assumption noticed by the authors in \cite{jin2016influential}.   On the other hand, both the multivariate Shapiro--Wilk's test and the Fisher’s test successfully reject the null while the eFR method reports a sampling $p$-value of 1 and fails to detect the violation to the normality assumption.

}

\section{Discussion}
\label{sec:discussion}
We proposed in this paper a nonparametric normality test based on the nearest neighbor information. It enjoys proper error control and is shown to have significant power improvement over the alternative approaches. We discuss in this section a related test statistic and some extensions and explorations of the current method.

\subsection{Test statistic based on $XX$}

Our proposed test statistic involves the event $YY$, i.e., the event that an observation in $\{Y_1,Y_2,\dots,Y_n\}$ finds its nearest neighbor in $\{Y_1,Y_2,\dots,Y_n\}$. A straightforward alternative method could be based on the test statistics which involves the event $XX$, i.e., the event that an observation in $\{X_1,X_2,\dots,X_n\}$ finds its nearest neighbor in $\{X_1,X_2,\dots,X_n\}$, and a question is whether the $XX$-equivalent statistic could be incorporated to further enhance the power.  Unfortunately, the $XX$ version is not as robust as the $YY$ version and does not have good performance in controlling the type I error.  Table \ref{tab:sizeXX} lists the empirical size of the $XX$ version of the test under the same settings as in Table \ref{tab:size}.  We observe that this statistic has serious size distortion  for Model 3 when the dimension is high.  This also explains the bad performance of eFR in controlling type I error under Model 3 because eFR partially uses the $XX$ information.

\begin{table}[htbp]
\caption{Empirical size (in percents) of the $XX$ version test, $\alpha=0.05$. \label{tab:sizeXX}}
   \begin{center}
     \begin{tabular}{ccccccc}

     \toprule
       \multicolumn{1}{c}{$n$}   & \multicolumn{3}{c}{$100$}& \multicolumn{3}{c}{$150$}
              \\  \cmidrule(r){2-4} \cmidrule(r){5-7}

       \multicolumn{1}{c}{$d$} & \multicolumn{1}{c}{20}&
      \multicolumn{1}{c}{100}& \multicolumn{1}{c}{300}&
      \multicolumn{1}{c}{20}&
      \multicolumn{1}{c}{100}& \multicolumn{1}{c}{300}
       \\ \midrule

  	Model 1 & 4.6 & 3.5 & 5.1 & 4.3 & 3.3 & 4.5 \\ \midrule
	Model 2 & 5.3 & 5.7 & 8.6 & 5.5 & 4.0 & 9.2 \\ \midrule
	Model 3 & 4.4 & 5.1 & 32.2 & 4.0 & 4.5 & 16.2 \\ \bottomrule
        \end{tabular}

\end{center}

\end{table}

\subsection{Extension to other distributions in the exponential family}

The idea of constructing this normality test could be extended to other distributions in the exponential family.  As long as one has reasonably good estimators for the parameters of the distribution, a similar procedure as described in Section \ref{sec:meth} can be applied.  In particular, one could replace the multivariate normal distribution in Algorithm \ref{algo1} by the distribution of interest, and replace the mean and covariance estimators by the estimators of the corresponding parameters.  The conditions for the asymptotic equivalence between the events $YY$ and $Y^*Y^*$ would need more careful investigations and warrant future research.




{
\subsection{A power enhanced algorithm}
\label{sec:L10}

To further improve the power performance of the method, especially when the sample size is limited, we can increase the number of sampling procedure in Algorithm \ref{algo1} as detailed in the following algorithm.

\begin{algorithm}[H]
\caption{}
\label{algo2}
\begin{algorithmic}[1]
\item For $i=1:L$, generate $Y_{i,1}, \dots, Y_{i,n} \overset{iid}{\sim} \mathcal{N}_d(\mu_x, \Sigma_x)$, calculate $r_i(YY)$. Let $\bar r(YY)$ be the average of $r_i(YY)$'s.

\item \label{step2} Generate $X_1^*, \dots, X_n^* \overset{iid}{\sim} \mathcal{N}_d(\mu_x, \Sigma_x)$, estimate its mean $\mu_{x^*}$ and covariance matrix $\Sigma_{x^*}$.  
For $i=1:L$, generate $Y_{i,1}^*, \dots, Y_{i,n}^* \overset{iid}{\sim} \mathcal{N}_d(\mu_{x^*}, \Sigma_{x^*})$, calculate $r_i(Y^*Y^*)$. Let $\bar r(Y^*Y^*)$ be the average of $r_i(Y^*Y^*)$'s.

\item Repeat Step 2 for $B$ times to get an estimate of the empirical distribution of $\bar r(YY)$ under $H_0$.

\item Compute the two-sided {\it sampling $p$-value}, $p(YY)$, i.e., the percentage of $|\bar r(Y^*Y^*)- m(\bar r(Y^*Y^*))|$ (out of $B$) that are larger than or equal to $|\bar r(YY)- m(\bar r(Y^*Y^*))|$, where $m(\bar r(Y^*Y^*))$ is the average of $\bar r(Y^*Y^*)$'s in Step 3.

\item For a given significance level $0<\alpha<1$, define $\Psi_\alpha = I\{p(YY) \leq \alpha\}$. Reject the null hypothesis whenever $\Psi_\alpha = 1$.
\end{algorithmic}
\end{algorithm}

Note that, when $L=1$, Algorithm \ref{algo2} is reduced to Algorithm \ref{algo1} in Section \ref{sec:algo}. In the following Figure \ref{boxplot.fig}, we show the boxplots of the sampling $p$-values for Distribution 1 and Model 3 when $L=1,2,\ldots, 10$, for $d=n=100$, with 100 replications. Similar patterns are observed for the other models. It can be seen that, as $L$ increases, the power performance of the method can be significantly improved and will get stable when $L$ is around 5. Also note that, the computation cost is growing as $L$ increases. Hence we mainly recommend Algorithm \ref{algo1} in the paper as it already shows reasonable well performance both in terms of empirical size and power as illustrated in Section \ref{sec:simulation}. 

\begin{figure}[h]
 \caption{\small Boxplots of the sampling $p$-values for Distribution 1 and Model 3 in Section \ref{sec:simulation}, for $d=m=100$, with 100 replications.}
\includegraphics[width=0.7\textwidth]{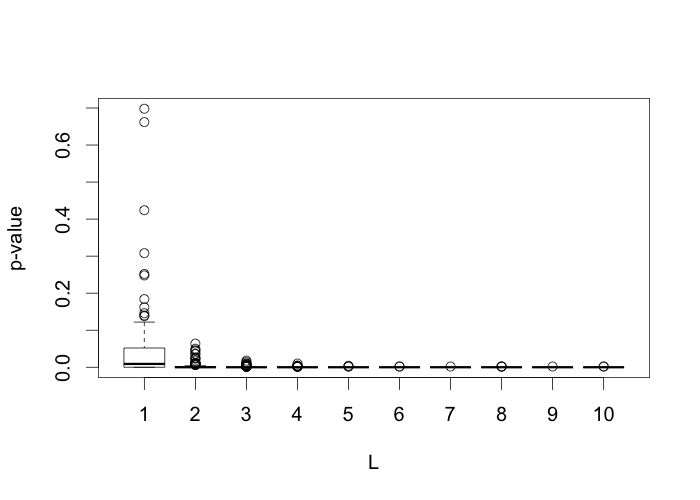} 
\centering
 \label{boxplot.fig}
\end{figure}

\subsection{Non-significant results and scale sensitivity}
When the testing results are nonsignificant, it means that the distribution is very close to the multivariate normal, whereas the unbiased property of the proposed test warrants future research. In this section, we perform additional simulation studies to explore the power performance of the proposed method as the distributions are approaching to normality. Specifically, we consider the following three sets of distributions:  multivariate chi-square distributions, multivariate $t$ distributions and multivariate Gaussian distribution with a certain proportion of the dimensions replaced by $t$ distribution.
\begin{itemize}

\item Multivariate chi-square distribution with degrees of freedom $\nu = 3, 5, 10$ and $20$ (the larger $\nu$ is, the closer the distribution is to the multivariate normal distribution).

\item Multivariate $t$ distribution with degrees of freedom $\nu = d/4, d/2, d$ and $2d$ (the larger $\nu$ is,  the closer the distribution is to the multivariate normal distribution).

\item Multivariate Gaussian distribution $\mathcal{N}_d(\mathbf{0}, \Sigma)$ with a certain proportion of the dimensions, ranging from $0.5$ to $0.1$, replaced by Multivariate $t$ distribution with degrees of freedom $\nu = d/4$ (the smaller the proportion is,  the closer the distribution is to the multivariate normal distribution).

\end{itemize}
The power performance of the methods are summarized in Tables \ref{tab_dof2} and \ref{tab:prop}.  It can been seen from the tables that, when the alternatives are getting closer to the multivariate normal, the testing results become more and more non-significant.

\begin{table}[htbp]
\caption{Empirical power (in percents) of the proposed algorithm (NEW) and extended Friedman-Rafsky test (eFR) for Model 3, $\alpha=0.05$, $d=100$ and $n=100$.}
   \begin{center}
     \begin{tabular}{ccccccccc}
     \toprule
     \multicolumn{1}{c}{Distribution}   & \multicolumn{4}{c}{chi-square}& \multicolumn{4}{c}{t}
              \\  \cmidrule(r){2-5} \cmidrule(r){6-9}

       \multicolumn{1}{c}{DOF} & \multicolumn{1}{c}{3}&
      \multicolumn{1}{c}{5}& \multicolumn{1}{c}{10}&
      \multicolumn{1}{c}{20}& \multicolumn{1}{c}{$d/4$}&
      \multicolumn{1}{c}{$d/2$}& \multicolumn{1}{c}{$2d$}
      & \multicolumn{1}{c}{$4d$}
       \\ \midrule

    NEW      & 45.1 & 25.2 & 10.6 & 6.5 & 99.7 & 87.9 & 17.3 & 6.3    \\
      eFR         & 9.4 & 6.8 & 6.5 & 5.1 & 2.3 & 4.8 & 4.0 & 3.3   \\  
        \bottomrule
    \end{tabular}

\end{center}
\label{tab_dof2}
\end{table}

\begin{table}[htbp]
\caption{Empirical power (in percents) with varying proportion of non-Gaussian dimensions, ranging from $0.5$ to $0.1$, for Model 1, $\alpha=0.05$, $d=100$ and $n=100$. \label{tab:prop}}
   \begin{center}
     \begin{tabular}{cccccc}

     \toprule
        \multicolumn{1}{c}{non-Gaussian proportion} & \multicolumn{1}{c}{0.5}&
      \multicolumn{1}{c}{0.4}& \multicolumn{1}{c}{0.3}&
      \multicolumn{1}{c}{0.2}&
      \multicolumn{1}{c}{0.1}
       \\ \midrule

     NEW         & 48.1 & 25.9  & 12.9 & 7.0 & 4.2    \\
     eFR           & 3.7 & 5.2  & 4.8  & 5.0  & 4.1   \\




        \bottomrule
    \end{tabular}

\end{center}

\end{table}

In addition, we explore the scale sensitivity of the proposed test by considering different covariance models with varying condition numbers as follows.
\begin{itemize}


\item Model 1A: $\S=I$.

\item Model 1B: $\S=\text{diag}(\sigma_{1,1},\ldots, \sigma_{d,d})$, where $\sigma_{i,i}= \text{Unif}(1, 5)$ for $i=1,\ldots,d$. 

\item Model 1C: $\S=\text{diag}(\sigma_{1,1},\ldots, \sigma_{d,d})$, where $\sigma_{i,i}= \text{Unif}(1, 20)$ for $i=1,\ldots,d$. 
\end{itemize}
We see from Table \ref{tab_condition} that, the empirical size and power performance (Distribution 1) of these three models are very similar to each other, which shows that the proposed method is not sensitive to the scale of the data.

\begin{table}[htbp]
\caption{Empirical size and power (in percents) of the proposed algorithm (NEW) and extended Friedman-Rafsky test (eFR) for varying condition numbers, $n=100$. }
   \begin{center}
     \begin{tabular}{cccccccc}

     \toprule
     \multicolumn{2}{c}{}   & \multicolumn{3}{c}{Size}& \multicolumn{3}{c}{Power}
              \\  \cmidrule(r){3-5} \cmidrule(r){6-8}

       \multicolumn{2}{c}{$d$} & \multicolumn{1}{c}{20}&
      \multicolumn{1}{c}{100}& \multicolumn{1}{c}{300}&
      \multicolumn{1}{c}{20}&
      \multicolumn{1}{c}{100}& \multicolumn{1}{c}{300}
       \\ \midrule

     \multirow{2}{*}{Model 1A} &NEW      & 4.3 & 4.3 & 5.1 & 58.5  & 91.3 & 93.0    \\
      & eFR         & 4.3 & 4.8 & 4.3 & 6.7  & 3.7 & 6.4    \\  \midrule

      \multirow{2}{*}{Model 1B} &NEW      & 3.2 & 4.7 & 5.0 & 55.9  & 91.2 & 91.7    \\
     & eFR         & 4.0 & 3.6 & 5.6 & 7.6  & 3.9 &  4.4   \\  \midrule

      \multirow{2}{*}{Model 1C} &NEW      & 5.1 & 5.1 & 5.3 & 49.8  & 87.0 & 92.0    \\
     & eFR         & 4.7 & 5.3 & 4.7 & 7.1  & 3.6 &  4.7   \\  
        \bottomrule
    \end{tabular}

\end{center}
\label{tab_condition}
\end{table}
}

\section{Proof of Theorem \ref{thm:asym}}
\label{sec:thmproof}
Let $\S=U\Lambda U^T$ and $\S_x=U_x\Lambda_xU_x^T$ be respectively the eigen-decomposition  of $\S$ and $\S_x$. Define $\S^{1/2}=U\Lambda^{1/2}U^T$ and $\S_x^{1/2}=U_x\Lambda_x^{1/2}U_x^T$. Then under the conditions of Theorem \ref{thm:asym}, by Lemma \ref{lemma:square_root}, we have $\|\Sigma_x^{1/2}-\Sigma^{1/2}\|_2 = o_{\pr}(n^{-\frac{1}{d}-\frac{(2+a) \log d + \kappa}{2 \log n}})$.

Let $f(\cdot)$ be the density of $\mathcal{N}_d(\mu,\Sigma)$, and $f^*(\cdot)$ be the density of $\mathcal{N}_d(\mu_x, \Sigma_x)$. 
Then we have,
\begin{align*}
& \pr(YY)  =  \int \pr(YY|\{X_i=x_i\}_{i=1,\dots,n}) \prod_{i=1}^n f(x_i) dx_i, \\
& \pr(Y^*Y^*)  =  \int \pr(Y^*Y^*|\{X^*_i=x^*_i\}_{i=1,\dots,n}) \prod_{i=1}^n f^*(x^*_i) dx^*_i.
\end{align*}
By the construction of $\{Y_1,\dots, Y_n\}$ and $\{Y^*_1,\dots, Y^*_n\}$, we have 
$$ \pr(YY|\{X_i=x_i^*\}_{i=1,\dots,n}) =  \pr(Y^*Y^*|\{X^*_i=x_i^*\}_{i=1,\dots,n}).$$
Hence,
$$\pr(Y^*Y^*)  =  \int \pr(YY|\{X_i=x_i^*\}_{i=1,\dots,n}) \prod_{i=1}^n f^*(x_i^*) dx_i^*.$$
By a change of measure, we have
$$\pr(Y^*Y^*) =  \int \pr(YY|\{X_i=\Sigma_x^{1/2}\Sigma^{-1/2}(x_i-\mu)+\mu_x\}_{i=1,\dots,n}) \prod_{i=1}^n f(x_i) dx_i. $$
It is not hard to see that if we shift the $x_i$'s all by a fixed value, the probability of $YY$ is unchanged.  Hence,
$$\pr(Y^*Y^*) =  \int \pr(YY|\{X_i=\Sigma_x^{1/2}\Sigma^{-1/2}x_i\}_{i=1,\dots,n}) \prod_{i=1}^n f(x_i) dx_i. $$
Let $w_i = \Sigma_x^{1/2}\Sigma^{-1/2}x_i$.  Then,
\begin{align}
& |\pr(YY)-\pr(Y^*Y^*)|  \nonumber \\
& = \left|\int \left(\pr(YY|\{X_i=x_i\}_{i=1,\dots,n})-\pr(YY|\{X_i=w_i\}_{i=1,\dots,n})\right) \prod_{i=1}^n f(x_i) dx_i \right| \label{eq:diff} \\
& \leq \int \left|\pr(YY|\{X_i=x_i\}_{i=1,\dots,n})-\pr(YY|\{X_i=w_i\}_{i=1,\dots,n})\right| \prod_{i=1}^n f(x_i) dx_i. \label{eq:diff2}
\end{align}

Let $\mu_1$ and $\Sigma_1$ be the estimated mean and variance based on $\{x_i\}_{i=1,\dots,n}$, and $\mu_2$ and $\Sigma_2$ be the estimated mean and variance based on $\{w_i\}_{i=1,\dots,n}$.   
Let $g_1(\cdot)$ and $g_2(\cdot)$ be the density function of $\mathcal{N}_d(\mu_1,\Sigma_1)$ and $\mathcal{N}_d(\mu_2,\Sigma_2)$, respectively. Let $Y_1, Y_2, \dots, Y_n \overset{iid}{\sim} \mathcal{N}_d(\mu_1,\Sigma_1)$, $\tilde{Y}_1, \tilde{Y}_2, \dots, \tilde{Y}_n \overset{iid}{\sim} \mathcal{N}_d(\mu_2,\Sigma_2)$, $N_{Y_i}$ be the observation in $\{\{Y_j\}_{j\neq i}, \{x_j\}_{j=1,\dots,n}\}$ that is closest to $Y_i$, $N_{\tilde{Y}_i}$ be the observation in $\{\{\tilde{Y}_j\}_{j\neq i}, \{w_j\}_{j=1,\dots,n}\}$ that is closest to $\tilde{Y}_i$, $\{Y\}=\{Y_i\}_{i=1,\dots,n}$, and $\{\tilde{Y}\}=\{\tilde{Y}_i\}_{i=1,\dots,n}$.  Then,  
\begin{align*}
& \pr(YY|\{X_i=x_i\}_{i=1,\dots,n})-\pr(YY|\{X_i=w_i\}_{i=1,\dots,n}) \\
& = \pr(N_{Y_1}\in\{Y\}) - \pr(N_{\tilde{Y}_1}\in \{\tilde{Y}\}) \\
& = \int \pr(N_{Y_1}\in\{Y\}|Y_1=y)g_1(y)dy - \int \pr(N_{\tilde{Y}_1}\in \{\tilde{Y}\}|\tilde{Y}_1=\tilde{y})g_2(\tilde{y}) d\tilde{y}.	 
\end{align*}
By change of measure, we have that
\begin{align*}
	& \int \pr(N_{\tilde{Y}_1}\in \{\tilde{Y}\}|\tilde{Y}_1=\tilde{y})g_2(\tilde{y}) d\tilde{y}\\
	& = \int \pr(N_{\tilde{Y}_1}\in \{\tilde{Y}\}|\tilde{Y}_1=\Sigma_2^{1/2}\Sigma_1^{-1/2}(y-\mu_1)+\mu_2)g_1(y) dy. 
\end{align*}
 
Let $y_w = \Sigma_2^{1/2}\Sigma_1^{-1/2}(y-\mu_1)+\mu_2$, then
\begin{align}
& |\pr(YY|\{X_i=x_i\}_{i=1,\dots,n})-\pr(YY|\{X_i=w_i\}_{i=1,\dots,n})|   \nonumber \\
& = \left|\int (\pr(N_{Y_1}\in\{Y\}|Y_1=y) - \pr(N_{\tilde{Y}_1}\in \{\tilde{Y}\}|\tilde{Y}_1=y_w)) g_1(y) dy \right| 
\label{eq:diffy} \\
& \leq 	\int |\pr(N_{Y_1}\in\{Y\}|Y_1=y) - \pr(N_{\tilde{Y}_1}\in \{\tilde{Y}\}|\tilde{Y}_1=y_w))| g_1(y) dy.  \label{eq:diffy2}
\end{align}


Let  $r_i = \|w_i - x_i\|_2$.  Define $\alpha^* = -\frac{1}{d}-\frac{(1+a)\log d+\kappa}{2\log n}$. By Lemma \ref{lemma:xdist} (see Section \ref{sec:lemma}), we have that $r_i = o_{\pr}(n^{\alpha^*}), \forall i$.  Also, given that $w_i = \Sigma_x^{1/2} \Sigma^{-1/2} x_i$, it is easy to have estimates such that $\mu_2 = \Sigma_x^{1/2} \Sigma^{-1/2}\mu_1$. 
 Then we have
\begin{eqnarray}
\|y_w-y\|_2 &=& \|\Sigma_2^{1/2}\Sigma_1^{-1/2}(y-\mu_1)+\mu_2-y\|_2\cr
&=& \|(\Sigma_2^{1/2}-\Sigma_1^{1/2})\Sigma_1^{-1/2}y + (\Sigma_x^{1/2}\Sigma^{-1/2}-\Sigma_2^{1/2}\Sigma_1^{-1/2})\mu_1\|_2\cr
&\leq& \|(\Sigma_2^{1/2}-\Sigma_1^{1/2})\Sigma_1^{-1/2}y\|_2 + \|(\Sigma_x^{1/2}\Sigma^{-1/2}-\Sigma_2^{1/2}\Sigma_1^{-1/2})\mu_1\|_2
\end{eqnarray}

Denote by $\tilde{\alpha} = {-\frac{1}{d}-\frac{(2+a) \log d + \kappa}{2 \log n}}$. 
Recall that $\|\Sigma_x^{1/2}-\Sigma^{1/2}\|_2 = o_{\pr}(n^{\tilde{\alpha}})$. Note that, the covariance matrix of $\{x_i, i=1,\ldots,n\}$ is $\Sigma$ and the covariance matrix of $\{w_i, i=1,\ldots,n\}$ is $\Sigma_x$. Then using the same estimation method of the covariance matrix as estimating $\Sigma$ by $\Sigma_x$, we can estimate $\Sigma_x$ by an estimator $\Sigma_2$ and estimate $\Sigma$ by $\Sigma_1$, such that
\[
\|\Sigma_2-\Sigma_x\|_2 = o_\pr(n^{\tilde{\alpha}}), \mbox{ and }\|\Sigma_1-\Sigma\|_2 = o_\pr(n^{\tilde{\alpha}}).
\]
Note that $\|\Sigma_x-\Sigma\|_2 = o_\pr(n^{\tilde{\alpha}})$, we have that $\|\Sigma_2-\Sigma_1\|_2 = o_\pr(n^{\tilde{\alpha}})$.
Then by the proofs of Lemma \ref{lemma:square_root} and the conditions of Theorem \ref{thm:asym}, we have that $\|\Sigma_2^{1/2}-\Sigma_1^{1/2}\|_2 = o_\pr(n^{\tilde{\alpha}})$.

Thus we have 
\begin{align*}
\|(\Sigma_2^{1/2}-\Sigma_1^{1/2})\Sigma_1^{-1/2}y\|_2 & \leq \|\Sigma_2^{1/2}-\Sigma_1^{1/2}\|_2\|\Sigma_1^{-1/2}y\|_2 \\
& = o_\pr(n^{\tilde{\alpha}}) O_\pr(\sqrt{d}) = o_{\pr} (n^{\alpha^*}).	
\end{align*}

By similar arguments, we have
\begin{eqnarray}
&&\|(\Sigma_x^{1/2}\Sigma^{-1/2}-\Sigma_2^{1/2}\Sigma_1^{-1/2})\mu_1\|_2 \cr
&& = \|((\Sigma_x^{1/2}-\Sigma^{1/2})\Sigma^{-1/2}-(\Sigma_2^{1/2}-\Sigma_1^{1/2})\Sigma_1^{-1/2})\mu_1\|_2 \cr
&& \leq \|((\Sigma_x^{1/2}-\Sigma^{1/2})\Sigma^{-1/2}\mu_1\|_2+\|(\Sigma_2^{1/2}-\Sigma_1^{1/2})\Sigma_1^{-1/2})\mu_1\|_2\cr
&& = o_{\pr} (n^{\alpha^*}). 
\end{eqnarray}
Thus we have that $\|y_w-y\|_2 = o_{\pr}(n^{\alpha^*})$.
 
Let
\begin{align*}
	j_x & = \arg \min_{j \in \{1,2, \dots, n\}} \{ \|y-x_j\|_2\}, \\
	j_w & = \arg \min_{j \in \{1,2, \dots, n\}} \{ \|y_w-w_j\|_2\}. \end{align*}
and $D_{\min,x} = \|y-x_{j_x}\|_2$, $D_{\min,w} =  \|y_w-w_{j_w}\|_2$.  
 
Suppose $D_{\min,x} = O_{\pr}(n^\alpha)$. Notice that $n^{\alpha^*} = n^{-\frac{1}{d}} d^{-(1+a)/2}e^{-\kappa /2} \leq O(d^{-1/2})$.   When $\alpha<\alpha^*$, based on Lemma \ref{lemma:onept_ori}, the probability that $D_{\min,x} = c_0 n^\alpha$ for some constant $c_0>0$ is of order $n\times o_\pr(n^{\alpha^* d} d^{-d/2} e^{\kappa d/2}) = o_\pr(d^{-d}) = o_\pr(1)$.

We thus focus on $\alpha\geq \alpha^*$.  By definitions of $D_{\min,x}$ and $D_{\min,w}$, and the facts that $\|x_i-w_i\|_2=o_\pr(n^{\alpha^*}), \forall i$, and $\|y-y_w\|_2 = o_\pr(n^{\alpha^*})$, we have that 
 $D_{\min,w} = D_{\min,x} + o_\pr(n^{\alpha^*})$.  Let $p_x$ be the probability that $Y_k\sim \mathcal{N}_d(\mu_1,\Sigma_1)$ falls in the $D_{\min,x}$-ball of $y$, and $p_w$ be the probability that $\tilde{Y}_k\sim \mathcal{N}_d(\mu_2,\Sigma_2)$ falls in the $D_{\min,w}$-ball of $y_w$.

 Let $\alpha_0 = -\frac{1}{d}+\frac{(1-a)\log d - \kappa}{2\log n} > \alpha^*$. We consider two scenarios: (1) $\alpha^*\leq\alpha<\alpha_0$, and (2) $\alpha\geq \alpha_0$.
	\begin{enumerate}[(1)]
		\item $\alpha^*\leq\alpha<\alpha_0$:
		\begin{enumerate}
			\item When $d\log d\leq \log n$, we have $n^{\alpha_0} = d^{\frac{1}{2}-\frac{\log n}{d\log d} - \frac{\kappa}{2\log d}}\leq O(d^{-\frac{1}{2}})$. Since $\mu_1$ and $\Sigma_1$ satisfy the condition for Lemma \ref{lemma:onept}, we have $$p_x = o_\pr(n^{d\alpha_0} d^{-d/2} e^{\kappa_1 d/2}) = o_\pr(n^{-1}\sqrt{|\Sigma|/|\Sigma_1|}) = o_\pr(n^{-1}),$$ where $\kappa_1 = 1-\frac{\log |\Sigma_1|}{d}-\log 2$.
			
			 Notice that $\mu_2$ and $\Sigma_2$ also satisfy the condition for Lemma \ref{lemma:onept}, so $$p_w = o_\pr(n^{d\alpha_0} d^{-d/2} e^{\kappa_2 d/2})= o_\pr(n^{-1}\sqrt{|\Sigma|/|\Sigma_2|}) = o_\pr(n^{-1}),$$ where $\kappa_2 = 1-\frac{\log |\Sigma_2|}{d}-\log 2$. 
			 \item When $d\log d>\log n$ and $d=o(\log n)$, we have $n^{\alpha_0}\leq O(\sqrt{d})$, by Lemma \ref{lemma:onept}, $\log p_x$ is 
	 \begin{align*}
	 	 & -\tfrac{1}{2}d \log d + d \log D_{\min,x} + \tfrac{1}{2}d(\kappa_1 + O_\pr(1)) \\
	 	 & \hspace{5mm} = -\log n - \tfrac{d}{2}(a\log d + O_\pr(1)).
	 \end{align*}
     Here, $a=\frac{\log n}{\xi_{d,n} d \log d}$ with $1\ll\xi_{d,n}=o({\log n}/{d})$ a positive constant.  We have $\log p_x$ is $-\log n - \frac{1}{2\xi_{d,n}}\log n + O_\pr(d)\ll -\log n$.  So $p_x= o_\pr(n^{-1})$.  Similarly, $p_w=o_\pr(n^{-1})$.
     
     \item When $d$ is of order $\log n$ or higher, $a={1}/{\epsilon_d}$ with $1\ll\epsilon_d = o(\log d)$, then  $\frac{d}{2}(a\log d + O_\pr(1)) \gg d\geq O(\log n)$, and $p_x$ is also of order $o_\pr(n^{-1})$.  Similarly, $p_w=o_\pr(n^{-1})$.

		\end{enumerate}
	Under (a), (b) and (c), we all have $p_x, p_w = o_\pr(n^{-1})$. Then, $$|\pr(N_{Y_1}\in\{Y\}|Y_1=y) - \pr(N_{\tilde{Y}_1}\in \{\tilde{Y}\}|Y_1=y_w)|=|o_\pr(1)-o_\pr(1)| = o_\pr(1).$$
	
	\item  $\alpha\geq \alpha_0$:

	First we consider $\alpha_0\leq \alpha\leq \frac{\log d}{2\log n}$. By the proof of Lemma \ref{lemma:onept} and the facts that $D_{\min,w} = D_{\min,x} + o_\pr(n^{\alpha^*})$, $e^{(\kappa_1-\kappa)d/2} = \sqrt{|\Sigma|/|\Sigma_1|} = 1+o_\pr(1)$, $e^{(\kappa_2-\kappa)d/2} = \sqrt{|\Sigma|/|\Sigma_2|} = 1+o_\pr(1)$, and $\|\Sigma_2^{-1}-\Sigma_1^{-1}\|_2=o_\pr(n^{\tilde\alpha})$. Then $p_w$ is 
	 \begin{align*}
	 & p_x\left(1+\tfrac{o_\pr(n^{\alpha^*})}{O_\pr(n^{\alpha})}\right)^d e^{o_\pr\left(n^{-\frac{1}{d}-\frac{(2+a)\log d + \kappa}{2\log n}} \left(\sqrt{d}~n^\alpha+n^{2\alpha}\right)\right)}e^{o_\pr(n^{\alpha^*}\sqrt{d})} + o_\pr(n^{-1}) \\
	 & = p_x\left(1+o_\pr(n^{\alpha^*-\alpha_0})\right)^d e^{o_\pr\left(n^{-\frac{1}{d}-\frac{a\log d + \kappa}{2\log n}} \right)}e^{o_\pr(1)}+o_\pr(n^{-1}) \\
	  & = p_x(1+o(d^{-1}))^d (1+o_\pr(1))^2 +o_\pr(n^{-1}) = p_x(1+o_\pr(1))+o_\pr(n^{-1}).	
	 \end{align*}
	Hence, $p_w$ is of the same order as $p_x$ when $\alpha\geq \alpha_0$.

	 Notice that, for $Y_1, Y_2\overset{iid}{\sim} \mathcal{N}_d(\mu_1,\Sigma_1)$, $\ep(\|Y_1-Y_2\|_2) = O(\sqrt{d})$.  Thus, when $D_{\min,x} = c\sqrt{d}$ for a sufficiently large constant $c$, $p_x$ is of order $O_\pr(1)$.  Similarly, when $D_{\min,w} = c\sqrt{d}$ for a sufficiently large constant $c$, $p_w$ is of order $O_\pr(1)$.  Thus, for $\alpha>\frac{\log d}{2\log n}$, we have $D_{\min,x}, D_{\min,w} \gg O_\pr(\sqrt{d})$ and $p_x, p_w = O_\pr(1)$ are also of the same order.

	\begin{enumerate}
		\item When $p_x,p_w$ are of order $o_\pr(n^{-1})$, $|\pr(N_{Y_1}\in\{Y\}|Y_1=y) - \pr(N_{\tilde{Y}_1}\in \{\tilde{Y}\}|Y_1=y_w)|=o_\pr(1)$.
		\item When $p_x, p_w$ are of order higher than $O_\pr(n^{-1})$, the probability that no other $Y_{k^\prime} \sim \mathcal{N}_d(\mu_1,\Sigma_1)$ falls in the $D_{\min,x}$-ball of $y$ goes to 0 as $n\rightarrow\infty$, and the probability that no other $\tilde{Y}_{k^\prime}\sim \mathcal{N}_d(\mu_2,\Sigma_2)$ falls in the $D_{\min,w}$-ball of $y_w$ also goes to 0 as $n\rightarrow\infty$.  So $$|\pr(N_{Y_1}\in\{Y\}|Y_1=y) - \pr(N_{\tilde{Y}_1}\in \{\tilde{Y}\}|Y_1=y_w)| = o_\pr(1). $$
		\item When $p_x=p_w = O_\pr(n^{-1})$.  Let $\delta = |D_{\min,x}-D_{\min,w}| + \|y-y_w\|_2$.  Then $\delta = o_\pr(n^{\alpha^*})$, and $\frac{\delta d}{D_{\min,x}} = o_\pr(\frac{n^{\alpha^*}d}{n^{\alpha_0}}) = o_\pr(1)$. \\
			 Here, we define two more probabilities.  Let $p_{x,2}$ be the probability that $Y_k\sim \mathcal{N}_d(\mu_1,\Sigma_1)$ falls in the $(D_{\min,x}+\delta)$-ball of $y$, and $p_{w,2}$ be the probability that $\tilde{Y}_k\sim \mathcal{N}_d(\mu_2,\Sigma_2)$ falls in the $(D_{\min,x}+\delta)$-ball of $y$.  It is clear that both the $D_{\min,x}$-ball of $y$ and the $D_{\min,w}$-ball of $y_w$ are contained in the $(D_{\min,x}+\delta)$-ball of $y$. Because $\frac{\sqrt{d}}{D_{\min,x}} \geq O_\pr\left(\frac{\sqrt{d}}{n^{-\frac{1}{d}}d^{\frac{1-a}{2}} e^{-\frac{\kappa}{2}}}\right) = O_\pr(n^{\frac{1}{d}}d^{\frac{a}{2}}e^{\frac{\kappa}{2}}) \geq O_\pr(1)$ and $\delta\leq o_\pr(d^{-1/2})$, by the proof of Lemma \ref{lemma:onept},  we have that $p_{x,2}$ is 
			 \begin{align*}
			 	&p_x\left(1+\tfrac{\delta}{D_{\min,x}}\right)^d e^{O_\pr(\delta \sqrt{d})} + O_\pr(\delta^d d^{-d/2}e^{\kappa_1 d/2})\\
			 	& = p_x~e^{d~O_\pr\left(\frac{\delta}{D_{\min,x}}\right)}e^{O_\pr(\delta \sqrt{d})} + o_\pr(\tfrac{\delta d}{D_{\min,x}} n^{-1}) \\
			 	& = p_x\left(1+O_\pr\left(\tfrac{\delta d}{D_{\min,x}}\right)\right) = p_x(1+o_\pr(1)),
			 \end{align*}
			 Similarly, $p_{w,2} = p_w(1+O_\pr(\frac{\delta d}{D_{\min,x}})) = p_w(1+o_\pr(1)).$
Then $p_{x,2}$ and $p_{w,2}$ are also of order $O_\pr(n^{-1})$.  \\
			Based on the proof of Lemma \ref{lemma:onept_ori}, $p_{x,2}$ and $p_{w,2}$ differ by a factor of 
			\[
			1 + O_\pr(dn^{\tilde \alpha}) = 1 + O_\pr(d^{-\frac{a}{2}-\frac{\log n}{d\log d}-O(\frac{1}{\log d})}) = 1 + o_\pr(1).
			\]
			
			Notice that $p_{x,2}-p_x = p_x(p_{x,2}/p_x-1) = O_\pr(n^{-1}) O_\pr(\frac{\delta d}{D_{\min,x}})=O_\pr(n^{-1} \delta d/D_{\min,x})$.  Similarly, $p_{w,2}-p_w = O_\pr(n^{-1} \delta d/D_{\min,x})$.  \\
			Let $\xi_{\min}=\min\{\delta d/D_{\min,x}, d^{-a-\frac{\log n}{d\log d}-O(\frac{1}{\log d})}\}$, we have that $|p_x-p_w| = O_\pr(n^{-1} \xi_{\min}).$  Let $p_x = c_0 n^{-1}$.  Then $p_w = c_0 n^{-1} + c_1 n^{-1} \xi_{\min} + o_\pr(n^{-1} \xi_{\min})$ for a constant $c_1$.  Then, 
\begin{align*}
&|\pr(N_{Y_1}\in\{Y\}|Y_1=y) - \pr(N_{\tilde{Y}_1}\in \{\tilde{Y}\}|\tilde{Y}_1=y_w)| \\
&	 = |(1-(1-p_x)^{n-1}) - (1-(1-p_w)^{n-1})| \\ 
&	 = |(1-p_w)^{n-1} - (1-p_x)^{n-1}| \\ 
&	 = |(1-c_0 n^{-1})^{n-1} - (1-c_0 n^{-1} - c_1 n^{-1}\xi_{\min}-o_\pr(n^{-1}\xi_{\min}))^{n-1}| \\
&	 = |(n-1) (1-c_0 n^{-1})^{n-2} c_1 n^{-1}\xi_{\min} + o_\pr(1)| \\
&	 = o_\pr(1).
\end{align*}			
			 
	\end{enumerate}
		
	 \end{enumerate}
	 
	 Thus, under all possibilities of scenarios (1) and (2), we have $|\pr(N_{Y_1}\in\{Y\}|Y_1=y) - \pr(N_{\tilde{Y}_1}\in \{\tilde{Y}\}|\tilde{Y}_1=y_w)| = o_\pr(1)$.  Hence, 
	 \begin{align*}
	 & |P(YY)-P(Y^*Y^*)| \\ 
	 & \leq \int_{x_1,\dots,x_n} \int_y |\pr(N_{Y_1}\in\{Y\}|Y_1=y) - \pr(N_{\tilde{Y}_1}\in \{\tilde{Y}\}|\tilde{Y}_1=y_w))| \\
	 & \hspace{75mm} \times g_1(y) dy \prod_{i=1}^n f(x_i) dx_i \\
	 & = o(1),	 	
	 \end{align*}
	 and the conclusion of the theorem follows.
		
%
%
%

\section{Technical Lemmas}
\label{sec:lemma}

\begin{lemma}\label{lemma:square_root}
For independent observations $X_1,\ldots,X_n\overset{iid}{\sim} \mathcal{N}_d(\mu,\S)$,  assume that  $\lambda_{\min}(\S)\geq C$ for some constant $C>0$. If 
$\|\Sigma_x-\Sigma\|_2 = o_{\pr}(r_{n,d})$ with $r_{n,d}=O(1)$, then we have $\|\Sigma_x^{1/2}-\Sigma^{1/2}\|_2 = o_{\pr}(r_{n,d})$.
\end{lemma}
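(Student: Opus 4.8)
The plan is to reduce the statement to a deterministic, dimension-free Lipschitz bound for the matrix square root in the operator norm, and then feed in the probabilistic hypothesis. Precisely, I would first establish that for any two symmetric positive semidefinite $d\times d$ matrices $A,B$, at least one of which is positive definite,
\[
\|A^{1/2}-B^{1/2}\|_2 \le \frac{\|A-B\|_2}{\sqrt{\lambda_{\min}(A)}+\sqrt{\lambda_{\min}(B)}}.
\]
To prove this I would use the integral representation $M^{1/2}=\tfrac{2}{\pi}\int_0^\infty M(M+t^2 I)^{-1}\,dt$, valid for symmetric $M\succeq 0$ by diagonalizing and using the scalar identity $\sqrt\lambda=\tfrac{2}{\pi}\int_0^\infty \tfrac{\lambda}{\lambda+t^2}\,dt$. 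Writing $M(M+t^2I)^{-1}=I-t^2(M+t^2I)^{-1}$ and subtracting, the identity terms cancel and one gets the resolvent identity
\[
A^{1/2}-B^{1/2}=\frac{2}{\pi}\int_0^\infty t^2\,(A+t^2I)^{-1}(A-B)(B+t^2I)^{-1}\,dt .
\]
Then submultiplicativity and $\|(A+t^2I)^{-1}\|_2\le(\lambda_{\min}(A)+t^2)^{-1}$ (likewise for $B$), together with the partial-fraction evaluation $\int_0^\infty \tfrac{t^2}{(a+t^2)(b+t^2)}\,dt=\tfrac{\pi}{2(\sqrt a+\sqrt b)}$, give the displayed bound.

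For the probabilistic step: since $r_{n,d}=O(1)$, the hypothesis forces $\|\Sigma_x-\Sigma\|_2=o_{\pr}(1)$, so by Weyl's inequality $\lambda_{\min}(\Sigma_x)\ge \lambda_{\min}(\Sigma)-\|\Sigma_x-\Sigma\|_2\ge C-o_{\pr}(1)$. Hence there is an event $\mathcal E_n$ with $\pr(\mathcal E_n)\to1$ on which $\lambda_{\min}(\Sigma_x)\ge C/2>0$; in particular $\Sigma_x\succ0$ and $\Sigma_x^{1/2}=U_x\Lambda_x^{1/2}U_x^T$ is well defined on $\mathcal E_n$. On $\mathcal E_n$, applying the deterministic inequality with $A=\Sigma_x$, $B=\Sigma$ and discarding the nonnegative term $\sqrt{\lambda_{\min}(\Sigma_x)}$ in the denominator yields
\[
\|\Sigma_x^{1/2}-\Sigma^{1/2}\|_2\le \frac{\|\Sigma_x-\Sigma\|_2}{\sqrt{\lambda_{\min}(\Sigma)}}\le \frac{\|\Sigma_x-\Sigma\|_2}{\sqrt C}=o_{\pr}(r_{n,d}),
\]
and since $\pr(\mathcal E_n^c)\to0$ this bound holds unconditionally in the $o_{\pr}$ sense, which is the claim.

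The only genuine obstacle is getting the \emph{dimension-free} constant in the Lipschitz bound. The naive route, writing $\Sigma_x-\Sigma=\Sigma_x^{1/2}\Delta+\Delta\,\Sigma^{1/2}$ with $\Delta=\Sigma_x^{1/2}-\Sigma^{1/2}$ and solving this Sylvester equation, only controls $\|\Delta\|_F$ by $\|\Sigma_x-\Sigma\|_F$, which costs a spurious $\sqrt d$ factor when passing back to the operator norm and would destroy the estimate, since $r_{n,d}$ is typically polynomially small in the applications of the lemma. The resolvent/integral argument above avoids this loss completely; an equally valid alternative is to differentiate $t\mapsto(\Sigma+t(\Sigma_x-\Sigma))^{1/2}$ and bound its derivative through the two-sided Sylvester operator $S\mapsto S(\cdot)+(\cdot)S$, using that $\lambda_{\min}$ stays at least $C/2$ along the segment on $\mathcal E_n$.
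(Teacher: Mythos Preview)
Your argument is correct and complete: the resolvent--integral representation, the derived inequality $\|A^{1/2}-B^{1/2}\|_2\le \|A-B\|_2/(\sqrt{\lambda_{\min}(A)}+\sqrt{\lambda_{\min}(B)})$, and the probabilistic wrap-up on the high-probability event $\mathcal E_n$ all check out.

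The paper, however, takes a more elementary linear-algebra route. It picks a unit eigenvector $v$ of $\Sigma_x^{1/2}-\Sigma^{1/2}$ with eigenvalue $\lambda$ and uses the scalar identity
\[
(\Sigma_x^{1/2}v-\Sigma^{1/2}v)^T(\Sigma_x^{1/2}v+\Sigma^{1/2}v)=v^T(\Sigma_x-\Sigma)v,
\]
the cross terms $v^T\Sigma_x^{1/2}\Sigma^{1/2}v$ and $v^T\Sigma^{1/2}\Sigma_x^{1/2}v$ cancelling because each is a scalar equal to its own transpose. The left side is $\lambda\, v^T(\Sigma_x^{1/2}+\Sigma^{1/2})v$, and since $v^T\Sigma^{1/2}v\ge \sqrt{C}$ while $v^T\Sigma_x^{1/2}v\ge 0$ on the event where $\Sigma_x\succeq 0$, one gets $|\lambda|\le \|\Sigma_x-\Sigma\|_2/\sqrt{C}$ directly, hence the spectral-norm bound. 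So the paper arrives at essentially the same constant with nothing beyond quadratic-form manipulations, whereas your approach packages the result as a clean, reusable deterministic Lipschitz inequality (of van~Hemmen--Ando type) at the cost of invoking the integral representation. As a side remark, your dismissal of the Sylvester route is a bit pessimistic: writing $\Delta=\int_0^\infty e^{-t\Sigma_x^{1/2}}(\Sigma_x-\Sigma)e^{-t\Sigma^{1/2}}\,dt$ also yields an operator-norm bound with denominator $\lambda_{\min}(\Sigma_x^{1/2})+\lambda_{\min}(\Sigma^{1/2})$, without any $\sqrt d$ loss.
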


\begin{proof}
Denote by $v\in R^d$ an eigenvector of $\Sigma_x^{1/2}-\Sigma^{1/2}$ of unit length, we have
\beas
|(\S_x^{1/2}v-\S^{1/2}v)^T(\S_x^{1/2}v+\S^{1/2}v)|=|v^T(\S_x-\S)v|\leq \|\Sigma_x-\Sigma\|_2 = o_{\pr}(r_{n,d}).
\eeas
Suppose that $(\S_x^{1/2}-\S^{1/2})v=\lambda v$, then we have that
\[
|\lambda v^T(\S_x^{1/2}+\S^{1/2})v|= o_{\pr}(r_{n,d}).
\]
By the condition that $\lambda_{\min}(\S)\geq C$ and that $\|\Sigma_x-\Sigma\|_2 = o_{\pr}(r_{n,d})$, we have
\[
v^T\S_x^{1/2}v = v^T\S^{1/2}v + v^T(\S_x^{1/2}-\S^{1/2})v \geq C-o(1)
\]
with probability going to 1. Hence, for some constant $C_0>0$, we have, with probability tending to 1,
\[
v^T(\S_x^{1/2}+\S^{1/2})v\geq C_0.
\]
It yields that $\lambda=o_{\pr}(r_{n,d})$. Since $v$ could be any eigenvector of $\Sigma_x^{1/2}-\Sigma^{1/2}$, we have
\[
\|\Sigma_x^{1/2}-\Sigma^{1/2}\|_2 = o_{\pr}(r_{n,d}).
\]
\end{proof}

\begin{lemma}\label{lemma:xdist}
{ Suppose $x \sim  \mathcal{N}_d(0,\Sigma)$.} Then under the conditions of Theorem \ref{thm:asym}, we have
\begin{align}\label{eq:error}
 \|\Sigma_x^{1/2}\Sigma^{-1/2}x - x\|_2 = o_{\pr}(n^{-\frac{1}{d}-\frac{(1+a)\log d + \kappa}{2 \log n}}).
\end{align}
\end{lemma}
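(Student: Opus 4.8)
The plan is to split $\Sigma_x^{1/2}\Sigma^{-1/2}x-x$ into a deterministic perturbation factor and a stochastic length factor, bound each separately, and then check that the exponents match up. The starting point is the algebraic identity
\[
\Sigma_x^{1/2}\Sigma^{-1/2}x-x=\bigl(\Sigma_x^{1/2}-\Sigma^{1/2}\bigr)\Sigma^{-1/2}x,
\]
which is legitimate because (A\ref{C1}) gives $\lambda_{\min}(\Sigma)\ge C_1>0$, so $\Sigma$ and $\Sigma^{1/2}$ are invertible. Submultiplicativity of the spectral norm then yields
\[
\bigl\|\Sigma_x^{1/2}\Sigma^{-1/2}x-x\bigr\|_2\le\bigl\|\Sigma_x^{1/2}-\Sigma^{1/2}\bigr\|_2\,\bigl\|\Sigma^{-1/2}x\bigr\|_2,
\]
so it suffices to control the two factors on the right.

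For the perturbation factor, put $\tilde\alpha=-\frac1d-\frac{(2+a)\log d+\kappa}{2\log n}$, so that (A\ref{C2}) reads $\|\Sigma_x-\Sigma\|_2=o_\pr(n^{\tilde\alpha})$. Since $\kappa=1-\frac1d\log|\Sigma|-\log 2$ is bounded --- by (A\ref{C1}), $\frac1d\log|\Sigma|\in[\log C_1,\log C_2]$ --- we have $n^{\tilde\alpha}=n^{-1/d}d^{-(2+a)/2}e^{-\kappa/2}=O(1)$, so Lemma~\ref{lemma:square_root} applies with $r_{n,d}=n^{\tilde\alpha}$ and gives $\|\Sigma_x^{1/2}-\Sigma^{1/2}\|_2=o_\pr(n^{\tilde\alpha})$. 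For the length factor, the construction behind the test (sample mean, adaptive-thresholding covariance estimator, nearest-neighbour counts) is invariant under a common translation of all observations --- the same observation used in the opening reduction of the proof of Theorem~\ref{thm:asym} --- so we may take $\mu=0$; then $\Sigma^{-1/2}x\sim\mathcal N_d(0,I_d)$, hence $\|\Sigma^{-1/2}x\|_2^2\sim\chi^2_d$ and $\|\Sigma^{-1/2}x\|_2=O_\pr(\sqrt d)$ by standard chi-square concentration. (Alternatively, without invoking translation invariance, $\|\Sigma^{-1/2}x\|_2\le C_1^{-1/2}\|x\|_2=O_\pr(\sqrt d)$ provided $\|\mu\|_2=O(\sqrt d)$, which is consistent with (A\ref{C2}).) Putting the pieces together,
\[
\bigl\|\Sigma_x^{1/2}\Sigma^{-1/2}x-x\bigr\|_2=o_\pr(n^{\tilde\alpha})\cdot O_\pr(\sqrt d)=o_\pr\!\bigl(n^{\tilde\alpha}\sqrt d\,\bigr).
\]

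To finish, one does the exponent bookkeeping
\[
n^{\tilde\alpha}\sqrt d=n^{-1/d}d^{-(2+a)/2}e^{-\kappa/2}\cdot d^{1/2}=n^{-1/d}d^{-(1+a)/2}e^{-\kappa/2}=n^{-\frac1d-\frac{(1+a)\log d+\kappa}{2\log n}},
\]
equivalently $\tilde\alpha+\frac1d+\frac{(1+a)\log d+\kappa}{2\log n}=-\frac{\log d}{2\log n}$, which is exactly the exponent in \eqref{eq:error}; thus the bound above is $o_\pr(n^{-\frac1d-\frac{(1+a)\log d+\kappa}{2\log n}})$, as claimed. I do not expect a real obstacle here: the one point that needs attention is precisely this last matching step --- one has to recognize that standardizing the observation costs an extra $\sqrt d$, and that this is exactly the factor that converts the operator-norm rate of (A\ref{C2}) into the (slower by $\sqrt d$) target rate of the lemma. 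The remaining minor care-points are verifying $n^{\tilde\alpha}=O(1)$ before invoking Lemma~\ref{lemma:square_root} and handling the unknown mean by translation invariance; the ``for all $i$'' form used inside the integral in the proof of Theorem~\ref{thm:asym} follows by applying the statement to each integration variable $x_i$ in turn.
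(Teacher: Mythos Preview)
Your proposal is correct and follows essentially the same route as the paper: factor $\Sigma_x^{1/2}\Sigma^{-1/2}x-x=(\Sigma_x^{1/2}-\Sigma^{1/2})\Sigma^{-1/2}x$, bound the first factor via Lemma~\ref{lemma:square_root} and the second by $O_\pr(\sqrt d)$ using that $\Sigma^{-1/2}x$ has identity covariance, then combine. You are in fact slightly more careful than the paper in verifying $n^{\tilde\alpha}=O(1)$ before invoking Lemma~\ref{lemma:square_root} and in handling the mean via translation invariance.
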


\begin{proof}
\beas
\|\Sigma_x^{1/2}\Sigma^{-1/2}x - x\|_2 &=& \|\Sigma_x^{1/2}\Sigma^{-1/2}x-\Sigma^{1/2}\Sigma^{-1/2}x\|_2\cr
&=& \|(\Sigma_x^{1/2}-\Sigma^{1/2})\Sigma^{-1/2}x\|_2.
\eeas

Let $z=\Sigma^{-1/2} x$, we have
\beas
\|\Sigma_x^{1/2}\Sigma^{-1/2}x - x\|_2 &\leq& \|\Sigma_x^{1/2}-\Sigma^{1/2}\|_2\|z\|_2.
\eeas

Notice that the covariance matrix of $z$ is an identity matrix, $\|z\|_2^2/d$ converges to a constant almost surely { as $d\rightarrow \infty$.}
By the condition  that $\|\Sigma_x^{1/2}-\Sigma^{1/2}\|_2 = o_{\pr}(n^{-\frac{1}{d}-\frac{(2+a)\log d + \kappa}{2 \log n}})$, we have that  
\[
\|\Sigma_x^{1/2}\Sigma^{-1/2}x - x\|_2\leq o_{\pr}(n^{-\frac{1}{d}-\frac{(2+a)\log d + \kappa}{2 \log n}})\|z\|_2=o_{\pr}(n^{-\frac{1}{d}-\frac{(1+a)\log d + \kappa}{2 \log n}}).
\]

\end{proof}

\begin{lemma}\label{lemma:onept_ori}
Let $X_1 \sim \mathcal{N}_d(\mu,\Sigma)$, $Y$ independent of $X_1$'s and $Y\sim \mathcal{N}_d(\mu_x,\Sigma_x)$, where $\mu$, $\Sigma$, $\mu_x$, and $\Sigma_x$ satisfy the conditions in Theorem \ref{thm:asym}. 
\begin{enumerate}
\item When $d$ is fixed, for $r=o(1)$, the probability $Y$ falls in the $r$-ball centered at $X_1$ is of order $O_\pr(r^d)$. 
\item When $d$ increases with $n$, for $r=O(d^\beta)$, $\beta\leq \frac{1}{2}$, the logarithm of the probability $Y$ falls in the $r$-ball centered at $X_1$ is $-\frac{1}{2} d\log d + d \log r + \frac{1}{2} d (\kappa+O_\pr(1))$.  More specifically, when $\beta\leq -0.5$, the probability $Y$ falls in the $r$-ball centered at $X_1$ is of order $O_\pr(r^d d^{-d/2} e^{\kappa d/2})$. 
	
\end{enumerate}

\end{lemma}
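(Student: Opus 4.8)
The plan is to reduce the two-point event to a single-Gaussian small-ball probability and then evaluate it by a sandwich argument. Since $X_1$ and $Y$ are independent, conditionally on the (random) pair $(\mu_x,\Sigma_x)$ the difference $Z:=Y-X_1$ is $\mathcal N_d(b,\Gamma)$ with $b=\mu_x-\mu$ and $\Gamma=\Sigma+\Sigma_x$, and ``$Y$ falls in the $r$-ball centered at $X_1$'' is precisely the event $\{\|Z\|_2\le r\}$. Hence the probability in question is
\[
p_r\;=\;(2\pi)^{-d/2}|\Gamma|^{-1/2}\int_{\|z\|_2\le r}\exp\!\Big(-\tfrac12(z-b)^{T}\Gamma^{-1}(z-b)\Big)\,dz .
\]

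First I would record what (A\ref{C1})--(A\ref{C2}) supply about $b$ and $\Gamma$. Clearly $\|b\|_2=O_\pr(1)$; and $\lambda_{\min}(\Sigma_x)\ge\lambda_{\min}(\Sigma)-\|\Sigma_x-\Sigma\|_2\ge C_1/2>0$ with probability tending to one, so $\lambda_{\min}(\Gamma)\ge\lambda_{\min}(\Sigma)\ge C_1$. For the determinant, write $\Sigma+\Sigma_x=2\Sigma\bigl(I+\tfrac12\Sigma^{-1}(\Sigma_x-\Sigma)\bigr)$; the matrix $\tfrac12\Sigma^{-1}(\Sigma_x-\Sigma)$ has real eigenvalues of modulus at most $\tfrac1{2C_1}\|\Sigma_x-\Sigma\|_2$, so $\log|\Gamma|=d\log2+\log|\Sigma|+O\bigl(d\|\Sigma_x-\Sigma\|_2\bigr)$. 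This is the one place where the precise rate in (A\ref{C2}) is needed: writing it as $\|\Sigma_x-\Sigma\|_2=o_\pr(n^{\tilde\alpha})$ with $\tilde\alpha=-\tfrac1d-\tfrac{(2+a)\log d+\kappa}{2\log n}$, one computes $d\,n^{\tilde\alpha}=n^{-1/d}d^{-a/2}e^{-\kappa/2}=O(1)$ because $a\ge 0$ and, under (A\ref{C1}), $\kappa$ is bounded; hence $d\|\Sigma_x-\Sigma\|_2=o_\pr(1)$ and $\log|\Gamma|=d\log2+\log|\Sigma|+o_\pr(1)$, i.e.\ $|\Gamma|^{1/2}=2^{d/2}|\Sigma|^{1/2}(1+o_\pr(1))=e^{d(1-\kappa)/2}(1+o_\pr(1))$ by the definition $\kappa=1-\tfrac1d\log|\Sigma|-\log2$.

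Next I would bound the quadratic form uniformly over the ball: for $\|z\|_2\le r$ one has $0\le (z-b)^{T}\Gamma^{-1}(z-b)\le\lambda_{\max}(\Gamma^{-1})(r+\|b\|_2)^2\le C_1^{-1}(r+\|b\|_2)^2=:Q_{\max}$, so $Q_{\max}=O_\pr((r+1)^2)$, which is $O_\pr(d)$ when $r=O(d^\beta)$, $\beta\le\tfrac12$, and $O_\pr(1)$ when $\beta\le-\tfrac12$ (and when $d$ is fixed and $r=o(1)$). Together with the exact ball volume $\mathrm{Vol}(B_r)=\pi^{d/2}r^d/\Gamma(d/2+1)$ this gives the sandwich
\[
\mathrm{Vol}(B_r)(2\pi)^{-d/2}|\Gamma|^{-1/2}e^{-Q_{\max}/2}\ \le\ p_r\ \le\ \mathrm{Vol}(B_r)(2\pi)^{-d/2}|\Gamma|^{-1/2}.
\]
Plugging in Stirling's formula $\Gamma(d/2+1)=\sqrt{\pi d}\,\bigl(d/(2e)\bigr)^{d/2}(1+o(1))$ and the estimate of $|\Gamma|^{1/2}$ above, the common factor simplifies to $\mathrm{Vol}(B_r)(2\pi)^{-d/2}|\Gamma|^{-1/2}=(\pi d)^{-1/2}\,r^d d^{-d/2}e^{\kappa d/2}(1+o_\pr(1))$, whose logarithm is $d\log r-\tfrac12 d\log d+\tfrac12 d\kappa-\tfrac12\log(\pi d)+o_\pr(1)$.

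The three assertions then follow by inspection. When $d$ is fixed and $r=o(1)$, all dimension-dependent constants are fixed, $|\Gamma|^{-1/2}=O_\pr(1)$ and $Q_{\max}=O_\pr(1)$, so $p_r$ equals $r^d$ times a factor that is $O_\pr(1)$ and bounded below by a positive random constant; in particular $p_r=O_\pr(r^d)$. When $d\to\infty$ and $r=O(d^\beta)$ with $\beta\le\tfrac12$, the two sides of the sandwich differ in logarithm only by $Q_{\max}/2$, and $Q_{\max}/2+\tfrac12\log(\pi d)=O_\pr(d)$, so $\log p_r=-\tfrac12 d\log d+d\log r+\tfrac12 d(\kappa+O_\pr(1))$. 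When in addition $\beta\le-\tfrac12$, $Q_{\max}=O_\pr(1)$, so the two sides differ only by a factor bounded above and below by positive random constants and $p_r=O_\pr\bigl(r^d d^{-d/2}e^{\kappa d/2}\bigr)$ (indeed $p_r$ equals $(\pi d)^{-1/2}r^d d^{-d/2}e^{\kappa d/2}$ up to such a factor, the surplus polynomial term being immaterial in all downstream uses). The only genuinely delicate step is the determinant bookkeeping of the second paragraph — making the perturbation $\Sigma_x-\Sigma$ contribute $o_\pr(1)$, not $o_\pr(d)$, to $\log|\Gamma|$, which is exactly what fixes the constant $e^{\kappa d/2}$ in the ``$\beta\le-0.5$'' case; everything else is the volume of a Euclidean ball and Stirling's formula.
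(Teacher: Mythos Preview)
Your argument is correct and is a genuinely different, more direct route than the paper's. The paper conditions on $X_1$, writes the density ratio $f_{\mu_x,\Sigma_x}(Y)/f_{\mu_x,\Sigma_x}(X_1)$ over the sphere $\|Y-X_1\|_2=t$, integrates out $X_1$ to extract the factor $2^{-d/2}|\Sigma|^{-1/2}$, and then performs a case analysis of the radial integral $\int_0^r t^{d-1}e^{c_1(t)t\sqrt d-(c_2+0.5)t^2}\,dt$ across several regimes of $\beta$ (including a separate treatment of a neighborhood of $t=1$). You instead observe at once that $Z=Y-X_1\sim\mathcal N_d(\mu_x-\mu,\Sigma+\Sigma_x)$ and reduce everything to the ball volume, Stirling's formula, and a crude sandwich on the density; the only delicate point is the determinant perturbation $\log|\Sigma+\Sigma_x|=d\log2+\log|\Sigma|+o_\pr(1)$, which you handle correctly via $d\,n^{\tilde\alpha}=O(1)$. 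Your approach is shorter and avoids the radial-integral case analysis entirely, at the cost of an innocuous extra $(\pi d)^{-1/2}$ factor in the $\beta\le-\tfrac12$ case. What the paper's longer derivation buys is an explicit intermediate expression (the integral \eqref{eq:prob_r}) that is later reused verbatim in the proof of Theorem~\ref{thm:asym} (scenario~(2)) to compare $p_x$ with $p_w$ and $p_{x,2}$ under small perturbations of $r$ and of the parameters; your sandwich would support the same comparisons, but the paper's form makes them slightly more transparent.
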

\begin{proof}
Under a special case that $\mu_x=\mathbf{0}$, $\Sigma_x=I$ and $X_1\equiv\mathbf{0}$, 
 the probability is
\begin{align*}
& \int_0^r \frac{d \pi^{d/2}}{\Gamma(d/2+1)} t^{d-1} \frac{1}{(2\pi)^{d/2}} e^{-\frac{1}{2}t^2} dt  = \frac{d}{2^{d/2}\Gamma(d/2+1)}\int_0^r  t^{d-1} e^{-\frac{1}{2}t^2} dt,
\end{align*}
which is of order $2^{-d/2} e^{-d/2 \log (d/2) + d/2}\int_0^r t^{d-1} e^{-\frac{1}{2}t^2} dt = d^{1-d/2} e^{d/2}\int_0^r t^{d-1} e^{-\frac{1}{2}t^2} dt$.

For generic $\mu_x, \Sigma_x$ and $X_1\sim \mathcal{N}_d(\mu,\Sigma)$, notice that 
\begin{align*}
\frac{f_{\mu_x,\Sigma_x}(Y)}{f_{\mu_x,\Sigma_x}(X_1)} & = \frac{e^{-\frac{1}{2}(Y-\mu_x)^T\Sigma_x^{-1}(Y-\mu_x)}}{e^{-\frac{1}{2}(X_1-\mu_x)^T\Sigma_x^{-1}(X_1-\mu_x)}} \\ 
& = e^{(X_1-Y)^T\Sigma_x^{-1}(X_1-\mu_x) - \frac{1}{2}(X_1-Y)^T\Sigma_x^{-1}(X_1-Y)}.
\end{align*}
When $\|Y-X_1\|_2 = t$, based on the conditions in Theorem \ref{thm:asym}, there exists a positive function $c_1(t)$ and a positive constant $c_2$ such that  $\frac{f_{\mu_x,\Sigma_x}(Y)}{f_{\mu_x,\Sigma_x}(X_1)} = O_\pr(e^{c_1(t) t \sqrt{d} - c_2 t^2})$.  Then, probability $Y$ falls in the $r$-ball of $X_i$ is of order
\begin{align*}
& d^{1-d/2} e^{d/2}\int_0^r t^{d-1} e^{-\frac{1}{2}t^2} \int |\Sigma_x|^{-\frac{1}{2}} e^{-\frac{1}{2}(x-\mu_x)^T\Sigma_x^{-1}(x-\mu_x)} \\
& \hspace{40mm} \times (2\pi)^{-\frac{d}{2}} |\Sigma|^{-\frac{1}{2}} e^{-\frac{1}{2}(x-\mu)^T\Sigma^{-1}(x-\mu)}	 e^{c_1(t) t \sqrt{d} - c_2 t^2} dx dt.	
\end{align*}
Under the conditions of Theorem \ref{thm:asym}, we have 
\begin{align*}
&	\int |\Sigma_x|^{-\frac{1}{2}} e^{-\frac{1}{2}(x-\mu_x)^T\Sigma_x^{-1}(x-\mu_x)}(2\pi)^{-\frac{d}{2}} |\Sigma|^{-\frac{1}{2}} e^{-\frac{1}{2}(x-\mu)^T\Sigma^{-1}(x-\mu)} dx \\
& = O\left(|\Sigma|^{-\frac{1}{2}}|\Sigma_x|^{-\frac{1}{2}}|\Sigma^{-1}+\Sigma_x^{-1}|^{-\frac{1}{2}} \right) = O(2^{-d/2}|\Sigma|^{-\frac{1}{2}}).
\end{align*}
Thus, the probability $Y$ falls in the $r$-ball of $X_i$ is of order 
\begin{align}\label{eq:prob_r}
	d^{1-d/2} e^{\kappa d/2} \int_0^r t^{d-1} e^{c_1(t) t\sqrt{d} - (c_2+0.5) t^2} dt
\end{align}


 We first consider the cases when $d$ increases with $n$.  Suppose $t=c_0 d^\beta$, for some fixed $\beta$ and $0<c_0\leq C$ for some constant $C>0$. Then the integrand is $$e^{(d-1)(\beta\log d + \log c_0) + c_1(t) c_0 d^{\beta+0.5} - (c_2+0.5) c_0^2 d^{2\beta}}.$$ We consider the following two scenarios.
\begin{enumerate}[(1)]
\item
If $\beta<0$ or $0<\beta\leq \frac{1}{2}$, then $d\log d$ dominates the other terms. Furthermore, we have that 
\begin{align*}
& \frac{d\log d}{c_0 d^{\beta+0.5}} = \frac{1}{c_0} d^{0.5-\beta}\log d \geq O(\log d), \\
& \frac{d\log d}{c_0^2 d^{2\beta}} = \frac{1}{c_0^2} d^{1-2\beta} \log d \geq O(\log d).
\end{align*}

\item
If $\beta=0$, we further consider the following two cases.  Let $\epsilon_1 = \frac{\log d}{\sqrt{d}}$,
	\begin{enumerate}
	\item
	if $c_0\geq 1+\epsilon_1$, then $d\log c_0$ dominates the other terms, and we have that
	\[
	\frac{d\log c_0}{c_0 d^{0.5}}  \geq O(\log d), \mbox{ and } \frac{d\log c_0}{c_0^2} \geq O(\log d \sqrt{d}) \gg O(\log d).
	\]
	\item
	if $c_0\leq 1-\epsilon_1$, again $d\log c_0$ dominates the other terms, and we have that
	\[
	\Big{|}\frac{d\log c_0}{c_0 d^{0.5}}\Big{|} \geq O(\log d), \mbox{ and } \Big{|}\frac{d\log c_0}{c_0^2}\Big{|} \geq O(\log d \sqrt{d}) \gg O(\log d).
	\]
	\end{enumerate}
\end{enumerate}
First of all, when $\beta\leq -0.5$, from scenario (1), we have 
\[
\frac{d\log d}{c_0 d^{\beta+0.5}} \geq O(d\log d), \mbox{ and } \frac{d\log d}{c_0^2 d^{2\beta}}  \geq O(d^2\log d).
\]
Then,
\begin{align*}
& d^{1-d/2} e^{\kappa d/2} \int_0^r t^{d-1} e^{c_1(t) t\sqrt{d} - (c_2+0.5) t^2} dt \\
&= d^{1-d/2} e^{\kappa d/2} \int_0^r e^{(d-1)\log t(1+ O(\frac{1}{d\log d}))} dt \\
&= d^{-d/2} e^{\kappa d/2} r^d r^{O(\frac{1}{\log d})} = O(d^{-d/2} r^d e^{\kappa d/2}).
\end{align*}
When $-0.5<\beta\leq 0.5$, based on scenarios (1) and (2), we have
\begin{align}
& d^{1-d/2} e^{\kappa d/2} \int_0^r t^{d-1} e^{c_1(t) t\sqrt{d} - (c_2+0.5) t^2} dt \nonumber \\
&= d^{1-d/2} e^{\kappa d/2} \int_0^r e^{(d-1)\log t(1+ O(1/\log d))} dt \label{eq:around1} \\
&=d^{-d/2} e^{\kappa d/2} r^d r^{O(d/\log d)} = d^{-d/2}r^d e^{\frac{1}{2}d(\kappa+O(\frac{\log r}{\log d}))} \nonumber \\
&= d^{-d/2}r^d e^{\frac{1}{2}d(\kappa+O(1))}. \nonumber
\end{align}
For \eqref{eq:around1}, the part of the integral from $1-\epsilon_1$ to $1+\epsilon_1$ is not an issue: Notice that $\int_{1-\epsilon_1}^{1+\epsilon_1} d t^{d-1} dt = (1+\epsilon_1)^d - (1-\epsilon_1)^d = e^{d \log (1+\epsilon_1)} - e^{d \log (1-\epsilon_1)} = e^{d O(\epsilon_1)}  = e^{O(\sqrt{d} \log d)} $, and $\int_{1-\epsilon_1}^{1+\epsilon_1} d e^{c^* t\sqrt{d}} dt = \frac{\sqrt{d}}{c^*} e^{c^* \sqrt{d}}(e^{1+\epsilon_1}-e^{1-\epsilon_1}) = O(\epsilon_1 \sqrt{d} e^{c^* \sqrt{d}} ) = O(\log d e^{c^* \sqrt{d}}) = e^{O(\sqrt{d})}$ with $c^* = \sup_{1-\epsilon_1 \leq t \leq 1+\epsilon_1} c_1(t)$ a positive constant.  Then, the difference between the two integrals is at most $e^{O(\sqrt{d} \log d)}$, which is much smaller than $e^{O(d)}$ and thus does not affect the above result.


When $d$ is fixed, the proofs are much simpler, and it is not hard to see that, when $r=o(1)$, the probability is of order $r^d$. 
\end{proof}

\begin{lemma}\label{lemma:onept}
Let $Y_1, Y_2 \overset{iid}{\sim} \mathcal{N}_d(\mu_0,\Sigma_0)$, where $\|\mu_0\|_\infty$ is bounded by a positive constant, and $\|\Sigma_0^{-1}-\Sigma^{-1}\|_2 = o(1)$.  
\begin{enumerate}
\item When $d$ is fixed, for $r=o(1)$, the probability $Y_2$ falls in the $r$-ball centered at $Y_1$ is of order $O_\pr(r^d)$. 
\item When $d$ increases with $n$, for $r=O(d^\beta)$, $\beta\leq \frac{1}{2}$, the logarithm of the probability $Y_2$ falls in the $r$-ball centered at $Y_1$ is $-\frac{1}{2} d\log d + d \log r + \frac{1}{2} d (\kappa_0+O_\pr(1))$, where $\kappa_0 = 1 - \frac{\log|\Sigma_0|}{d}-\log 2$.  More specifically, when $\beta\leq -0.5$, the probability $Y_2$ falls in the $r$-ball centered at $Y_1$ is of order $O_\pr(r^d d^{-d/2} e^{\kappa_0 d/2})$. 
\end{enumerate}

%
\end{lemma}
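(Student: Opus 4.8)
The plan is to mirror the proof of Lemma \ref{lemma:onept_ori}, but to exploit the simplification available when the two points come from the same distribution: the difference of the two Gaussians is itself exactly Gaussian, so no density-ratio-around-a-random-center argument is needed. Concretely, since $Y_1, Y_2 \overset{iid}{\sim}\mathcal{N}_d(\mu_0,\Sigma_0)$ one has $Y_2 - Y_1 \sim \mathcal{N}_d(\mathbf{0}, 2\Sigma_0)$, so the mean $\mu_0$ plays no role and
$$\pr(\|Y_2-Y_1\|_2 \le r) = \int_{\|u\|_2 \le r} (2\pi)^{-d/2}|2\Sigma_0|^{-1/2} e^{-\frac14 u^\top\Sigma_0^{-1}u}\,du.$$
This leaves a single isotropic-type integral to estimate.

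Next I would control the Gaussian factor using the hypothesis $\|\Sigma_0^{-1}-\Sigma^{-1}\|_2 = o(1)$ together with (A\ref{C1}): the eigenvalues of $\Sigma_0^{-1}$ lie in a fixed compact subinterval of $(0,\infty)$ with probability tending to one, hence for $\|u\|_2 = t$ one has $e^{-\frac14 u^\top\Sigma_0^{-1}u} = e^{-c^* t^2}$ with $c^*$ bounded above and below by positive constants w.p.$\to1$, and $\tfrac1d\log|\Sigma_0|$ (hence $\kappa_0$) is a bounded random quantity. Passing to spherical coordinates, with the radius-$t$ sphere having surface area $\frac{d\pi^{d/2}}{\Gamma(d/2+1)}t^{d-1}$, and sandwiching $u^\top\Sigma_0^{-1}u$ between $\lambda_{\min}(\Sigma_0^{-1})t^2$ and $\lambda_{\max}(\Sigma_0^{-1})t^2$, I get that the probability is of order
$$\frac{d}{2^{d}|\Sigma_0|^{1/2}\Gamma(d/2+1)}\int_0^r t^{d-1}e^{-c^* t^2}\,dt;$$
by Stirling's formula and the identity $\kappa_0 = 1 - \tfrac1d\log|\Sigma_0| - \log 2$, the prefactor is of order $d^{1-d/2}e^{\kappa_0 d/2}$ (up to a factor polynomial in $d$, as in the proof of Lemma \ref{lemma:onept_ori}).

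The remaining integral $\int_0^r t^{d-1}e^{-c^* t^2}\,dt$ with $r = O(d^\beta)$, $\beta\le\tfrac12$, is then handled exactly as in the proof of Lemma \ref{lemma:onept_ori}: at $t = c_0 d^\beta$ the integrand is $\exp\big((d-1)(\beta\log d + \log c_0) - c^* c_0^2 d^{2\beta}\big)$, and $d\log d$ (for $\beta\neq0$) or $d\log c_0$ (for $\beta=0$ with $c_0$ bounded away from $1$) dominates $d^{2\beta}$ by at least a factor $O(\log d)$. Consequently, for $\beta\le-\tfrac12$ the factor $e^{-c^* t^2}$ is $1+o(1)$ uniformly on $[0,r]$, the integral equals $(1+o(1))r^d/d$, and the probability is of order $O_\pr(r^d d^{-d/2}e^{\kappa_0 d/2})$; for $-\tfrac12<\beta\le\tfrac12$ the integral equals $r^d r^{O(1/\log d)}/d$ up to constants, so $\log\pr(\|Y_2-Y_1\|_2\le r) = -\tfrac12 d\log d + d\log r + \tfrac12 d(\kappa_0 + O_\pr(1))$. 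The fixed-$d$ case is immediate: $\phi_{0,2\Sigma_0}$ is bounded w.p.$\to1$ by a constant depending only on $C_1,C_2$, so the probability is $O_\pr(r^d)$ for $r=o(1)$.

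I expect no serious obstacle, since the reduction to $Y_2-Y_1\sim\mathcal{N}_d(\mathbf{0},2\Sigma_0)$ makes this lemma genuinely easier than Lemma \ref{lemma:onept_ori}. The one point requiring care is verifying that all the quantities absorbed into "$O_\pr(1)$" — the constant $c^*$, the eigenvalue bounds on $\Sigma_0^{-1}$, $\kappa_0$, and the $(d-1)\log c_0$ and $r^{O(1/\log d)}$ corrections from the radial integral — are bounded uniformly over the region of integration on an event of probability tending to one, so that the stated logarithmic asymptotics hold with the advertised error term.
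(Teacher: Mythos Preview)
Your approach is correct and is genuinely more direct than the paper's. The paper does not exploit the i.i.d.\ structure: it simply reruns the density-ratio machinery of Lemma~\ref{lemma:onept_ori} with $Y_1$ in the role of the center and $Y_2$ in the role of the satellite, obtaining
\[
d^{1-d/2} e^{\kappa_0 d/2} \int_0^r t^{d-1} e^{c_1(t) t\sqrt{d}(1+o(1)) - (c_2+0.5) t^2(1+o(1))}\,dt,
\]
and then invokes the scenario analysis of Lemma~\ref{lemma:onept_ori} verbatim. Your observation that $Y_2-Y_1\sim\mathcal{N}_d(\mathbf 0,2\Sigma_0)$ removes the density-ratio step entirely, so the $e^{c_1(t)t\sqrt d}$ term never appears and the radial integrand is simply $t^{d-1}e^{-c^*t^2}$. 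The eigenvalue sandwich on $u^\top\Sigma_0^{-1}u$ then gives matching upper and lower bounds: for $\beta\le -\tfrac12$ the exponential is $1+o(1)$ on $[0,r]$ and both sides of the sandwich coincide at leading order, while for $-\tfrac12<\beta\le\tfrac12$ the two values of $c^*$ affect only the $O_\pr(1)$ remainder inside $\tfrac12 d(\kappa_0+O_\pr(1))$. What the paper's route buys is literal reuse of Lemma~\ref{lemma:onept_ori} with no new argument; what yours buys is a cleaner integrand and a proof that does not depend on the somewhat delicate $c_1(t)t\sqrt d$ bookkeeping.

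One small correction: in your $-\tfrac12<\beta\le\tfrac12$ case you write the integral as $r^d r^{O(1/\log d)}/d$, but the exponent perturbation is $O(d/\log d)$, not $O(1/\log d)$ (matching the paper's computation in Lemma~\ref{lemma:onept_ori}). This is harmless for the conclusion, since $\log r=O(\log d)$ makes the resulting $O(d)$ term absorb into $\tfrac12 d(\kappa_0+O_\pr(1))$ either way.
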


\begin{proof}
	Based on the proof of Lemma \ref{lemma:onept_ori}, the probability $Y_2$ falls in the $r$-ball of $Y_1$ is of order 
	\begin{align*}
& d^{1-d/2} e^{d/2}\int_0^r t^{d-1} e^{-\frac{1}{2}t^2} \int |\Sigma_0|^{-\frac{1}{2}} e^{-\frac{1}{2}(x-\mu_0)^T\Sigma_0^{-1}(x-\mu_0)}(2\pi)^{-\frac{d}{2}} |\Sigma_0|^{-\frac{1}{2}}\\
&\hspace{37mm} \times e^{-\frac{1}{2}(x-\mu_0)^T\Sigma_0^{-1}(x-\mu_0)}	 e^{c_1(t) t \sqrt{d}(1+o(1)) - c_2 t^2(1+o(1))} dx dt \\
& = d^{1-d/2} e^{d/2} 2^{-d/2} |\Sigma_0|^{-\frac{1}{2}} \int_0^r t^{d-1} e^{c_1(t) t\sqrt{d}(1+o(1)) - (c_2+0.5) t^2(1+o(1))} dt \\
& = d^{1-d/2} e^{\kappa_0 d/2} \int_0^r t^{d-1} e^{c_1(t) t\sqrt{d}(1+o(1)) - (c_2+0.5) t^2(1+o(1))} dt,
\end{align*}
with $c_1(t)$ and $c_2$ the same as those in the proof of Lemma \ref{lemma:onept_ori}. Then, with the same arguments as in the proof of Lemma \ref{lemma:onept_ori}, the results of this lemma follow.

\end{proof}


\section{Acknowledgements}
{ The authors thank the Associate Editor and two referees for their helpful constructive comments which have helped to improve the quality and presentation of the paper. The authors also thank the reviewer's suggestions on various numerical alternatives including the improved multivariate Shapiro--Wilk's test and the Fisher's test.}

\bibliographystyle{apalike}
\bibliography{test}


\end{document}